\numberwithin{equation}{section}
\numberwithin{figure}{section}
\theoremstyle{plain}
\newtheorem{thm}{\protect\theoremname}
  \theoremstyle{definition}
  \newtheorem{defn}[thm]{\protect\definitionname}
  \theoremstyle{remark}
  \newtheorem{rem}[thm]{\protect\remarkname}
  \theoremstyle{plain}
  \newtheorem{prop}[thm]{\protect\propositionname}
  \theoremstyle{plain}
  \newtheorem{lem}[thm]{\protect\lemmaname}
\newtheorem{assumption}{Assumption}
\theoremstyle{definition}
\newcommand{\V}{\mathcal{V}}
\newcommand{\E}{\mathcal{E}}
\def\N{\mathbb{N}}
\newcommand{\ui}{\textrm{i}}
\newcommand{\ue}{\textrm{e}}
\newcommand{\ud}{\mathrm{d}}
\newcommand{\bdm}{\begin{displaymath}}
\newcommand{\edm}{\end{displaymath}}
\newcommand{\beq}{\begin{equation}}
\newcommand{\eeq}{\end{equation}}
\newcommand{\beqa}{\begin{eqnarray}}
\newcommand{\eeqa}{\end{eqnarray}}
\newcommand{\id}{\mathbb{I}}
  \providecommand{\definitionname}{Definition}
  \providecommand{\lemmaname}{Lemma}
  \providecommand{\propositionname}{Proposition}
  \providecommand{\remarkname}{Remark}
\providecommand{\theoremname}{Theorem}
\begin{document}
\global\long\def\disgraph{\mathcal{G}}

\global\long\def\metgraph{\Gamma}

\global\long\def\dismetgraph{\mathcal{G}_{\Gamma}}

\global\long\def\V{\mathcal{V}}

\global\long\def\E{\overleftrightarrow{\mathcal{E}}}

\global\long\def\EE{\mathcal{E}}

\global\long\def\R{\mathbb{R}}

\global\long\def\Q{\mathbb{Q}}

\global\long\def\Z{\mathbb{Z}}

\global\long\def\C{\mathbb{C}}

\global\long\def\N{\mathbb{N}}

\global\long\def\lap{\mathbf{L}}

\global\long\def\hamil{\mathcal{H}}

\global\long\def\trans{\mathbf{P}}

\global\long\def\conmat{\mathbf{C}}

\global\long\def\hes{\mathscr{H}}

\global\long\def\morse{\mathbf{\mathscr{M}}}

\global\long\def\nums{\mathcal{N}}

\global\long\def\at{\mathbf{\big|}}

\global\long\def\id{\mathbf{1}}

\global\long\def\ui{\mathbf{\textrm{i}}}

\global\long\def\ue{\mathbf{\textrm{e}}}

\global\long\def\ud{\mathbf{\textrm{d}}}

\global\long\def\fc{\phi_{n}}

\global\long\def\sur{\sigma_{n}}

\global\long\def\nc{\nu_{n}}

\global\long\def\magv{\vec{\alpha}}

\global\long\def\lenv{\vec{l}}

\global\long\def\intlen{j}

\global\long\def\intlenv{\vec{j}}

\global\long\def\lenmat{\mathbf{E}}

\global\long\def\lenmap{\mathscr{L}}

\global\long\def\scatmat{\mathbf{S}}

\global\long\def\magp{\mathcal{A}}

\global\long\def\magmat{\mathbf{A}}

\global\long\def\degmat{\mathbf{D}}

\global\long\def\torus{\mathbb{T}^{\left|I\right|}}

\global\long\def\tvec{\vec{x}}

\global\long\def\zvec{\vec{0}}

\global\long\def\zmat{\mathbf{0}}

\global\long\def\ham{\mathcal{H}}

\global\long\def\pc{\mathcal{\mathcal{PC}}}

\global\long\def\bra#1{\left\langle #1\right|}

\global\long\def\ket#1{\left|#1\right\rangle }

\title[$\left\{ 0,1,2,3,\ldots\right\} $ implies tree]{The Nodal Count $\left\{ 0,\,1,\,2,\,3,\ldots\right\} $\\
Implies the Graph is a Tree}

\author{Ram Band}

\address{Department of Mathematics, University Walk, Clifton, Bristol BS8
1TW, U.K.}
\begin{abstract}
Sturm's oscillation theorem states that the $n^{\textrm{th}}$ eigenfunction
of a Sturm-Liouville operator on the interval has $n-1$ zeros (nodes)
\cite{Stu_jmpa36,Stu_jmpa36a}. This result was generalized for all
metric tree graphs \cite{PokPryObe_mz96,Schapotschnikow06} and an
analogous theorem was proven for discrete tree graphs \cite{Berkolaiko07,DhaRam_prl85,Fiedler_cmj75}.
We prove the converse theorems for both discrete and metric graphs.
Namely, if for all $n$, the $n^{\textrm{th}}$ eigenfunction of the
graph has $n-1$ zeros then the graph is a tree. Our proofs use a
recently obtained connection between the graph's nodal count and the
magnetic stability of its eigenvalues \cite{Ber_arx11,BerWey_ptrs13,cdv_arx12}.
In the course of the proof we show that it is not possible for all
(or even almost all, in the metric case) the eigenvalues to exhibit
a diamagnetic behaviour. In addition, we develop a notion of 'discretized'
versions of a metric graph and prove that their nodal counts are related
to this of the metric graph.
\end{abstract}
\maketitle

\section{Introduction\label{sec:Introduction}}

Nodal domains were first presented in full glory by Chladni's Sound
figures. By the end of the $18^{\textrm{th}}$ century Chladni was
performing the following demonstration: he spread sand on a brass
plate and stroke it with a violin bow. This caused the sand to accumulate
in intricate patterns of nodal lines - the lines where the vibration
amplitude vanishes. The areas bounded by the nodal lines are the nodal
domains. The first rigorous result on nodal domains is probably Sturm's
oscillation theorem, according to which a vibrating string is divided
into exactly $\textit{n}$ nodal intervals by the zeros of its $\textit{n}^{th}$
vibrational mode \cite{Stu_jmpa36,Stu_jmpa36a} (and see also \cite{CourantHilbert_volume1},
p. 454). In the next century Courant treated vibrating membranes and
proved that the number of nodal domains of the $n^{\textrm{th}}$
eigenfunction of the Laplacian is bounded from above by $n$ \cite{Courant23}
(see also \cite{CourantHilbert_volume1}, p. 452). Pleijel further
restricted the possible nodal domain counts and showed for example,
that the Courant bound can be attained only a finite number of times
\cite{Pleijel56}. These are some of the earlier results in the field
of counting nodal domains. This field had gained an exciting turn
when Blum, Gnutzmann and Smilansky have shown that the nodal count
statistics may reveal the nature of the underlying manifold - whether
its classical dynamics are integrable or chaotic, \cite{BGS02}. This
opened a new research direction of treating the nodal count from the
inverse problems perspective. One aspect of this research is rephrasing
the famous question of Mark Kac by asking 'Can one \emph{count} the
shape of a drum?' (see \cite{Kac66} for Kac's original question).
Namely, what can one learn about an object (manifold, graph, etc.),
knowing the nodal counts of all of its eigenfunctions? One way to
treat this question is by studying the direct (rather than inverse)
problem and developing formulae which describe nodal count sequences
(\cite{AroBanFajGnu_jpa12,AroSmi_arx10,BanBerSmi_ahp12,GKS06,KS08,Klawonn09}).
Such formulae are expected to reveal various geometric properties
of the underlying object which one seeks to reveal. One can also study
inverse nodal problems by comparing the nodal information with the
spectral one. A well established conjecture in the field claims that
isospectral objects have different nodal count sequences. After its
first appearance in a paper by Gnutzmann, Smilansky and Sondergaard
\cite{GSS05}, the conjecture initiated a series of works on nodal
counts of various objects, either affirming the conjecture in certain
settings \cite{BOS08,BSS06,BS07,BruKlaPuh_jpa07,Oren07}, or pointing
out counterexamples \cite{BruFaj_cmp12,OreBan_jpa12}. The general
validity of this conjecture is still not well understood. The most
recent approach in the study of nodal counts describes the nodal domain
count (and even morphology) of individual eigenfunctions in terms
of partitions. This was triggered by Helffer, Hoffmann\textendash{}Ostenhof
and Terracini who study Schr�dinger operators on two-dimensional domains
using partitions of the domain, \cite{HHT09}. They provide characterization
of the morphology and number of nodal domains of eigenfunctions which
attain the Courant bound. Following this, Band, Berkolaiko, Raz and
Smilansky used a partition approach for metric graphs and described
the number of nodal points and their location for all eigenfunctions
via a Morse index of an energy function, \cite{BanBerRazSmi_cmp12}.
This result initiated similar connections between nodal counts and
Morse indices, for discrete graphs and manifolds, which were studied
by Berkolaiko, Kuchment, Raz and Smilasnky \cite{BerKucSmi_arx11,BerRazSmi_jpa12}.
Finally, three new works show that the response of an eigenvalue to
magnetic fields on the graph dictates the nodal count via a Morse
index connection. The first of these results was obtained by Berkolaiko
for discrete graphs \cite{Ber_arx11} and shortly afterward an additional
proof was supplied by Colin de Verdi\`{e}re \cite{cdv_arx12}. Berkolaiko
and Weyand provide the last work, for the time being, in this series
and it appears in this collection \cite{BerWey_ptrs13}.

The current paper adopts this magnetic approach to solve inverse nodal
problems on both metric and discrete graphs. For metric graphs, 'nodal
lines' are actually nodal points, the zeros of the eigenfunction,
and the nodal domains are the subgraphs bounded in between. For discrete
graphs, 'nodal lines' are edges connecting two vertices at which the
eigenvector differs in sign, and the nodal domains are the subgraphs
obtained upon removal of those edges. The main result of this paper
is a solution to an inverse nodal problem - under some genericity
assumptions, if for all $n$ the $n^{\textrm{th }}$eigenfunction
has $n-1$ zeros, then the graph cannot have cycles and must be a
tree. This result is valid for both metric and discrete graphs, although
the assumptions and proof methods of the relevant theorems are different
(see subsection \ref{sub:main_results} for exact details). We conclude
this introductory part by referring the reader to the collection of
articles, \cite{SS07}, where a broad view is given on the nodal domain
research, its history, applications and the numerous types of objects
it concerns.\\

The paper is set in the following way. In the rest of the introductory
section we familiarize the reader with all ingredients which plays
a role in the formulation of the theorems: magnetic operators on both
discrete and metric graphs, the connection between both types of graphs
and their nodal counts. The last introductory subsection \ref{sub:main_results}
presents the two main theorems of the paper. Section \ref{sec:tools_for_proofs}
introduces the three main tools needed for the proofs. The next two
sections bring the proofs of our inverse nodal theorems for both discrete
and metric graphs, each in a separate section. Section \ref{sec:discretized_vesrions}
presents a method of discretizing metric graphs which gives another
insight for the proof in section \ref{sec:inverse_theorem_metric}
and might lead to generalizations of the inverse result. Finally we
discuss the essence of the work and offer directions for further exploration.
It is important to emphasize that some of the paper's sections can
be read independently of those preceding them. The schematic diagram
in figure \ref{fig:dependency_diagram} shows the section dependencies
as blocks resting on top of other blocks on which they depend.

\begin{figure}[h]
\includegraphics[scale=0.20]{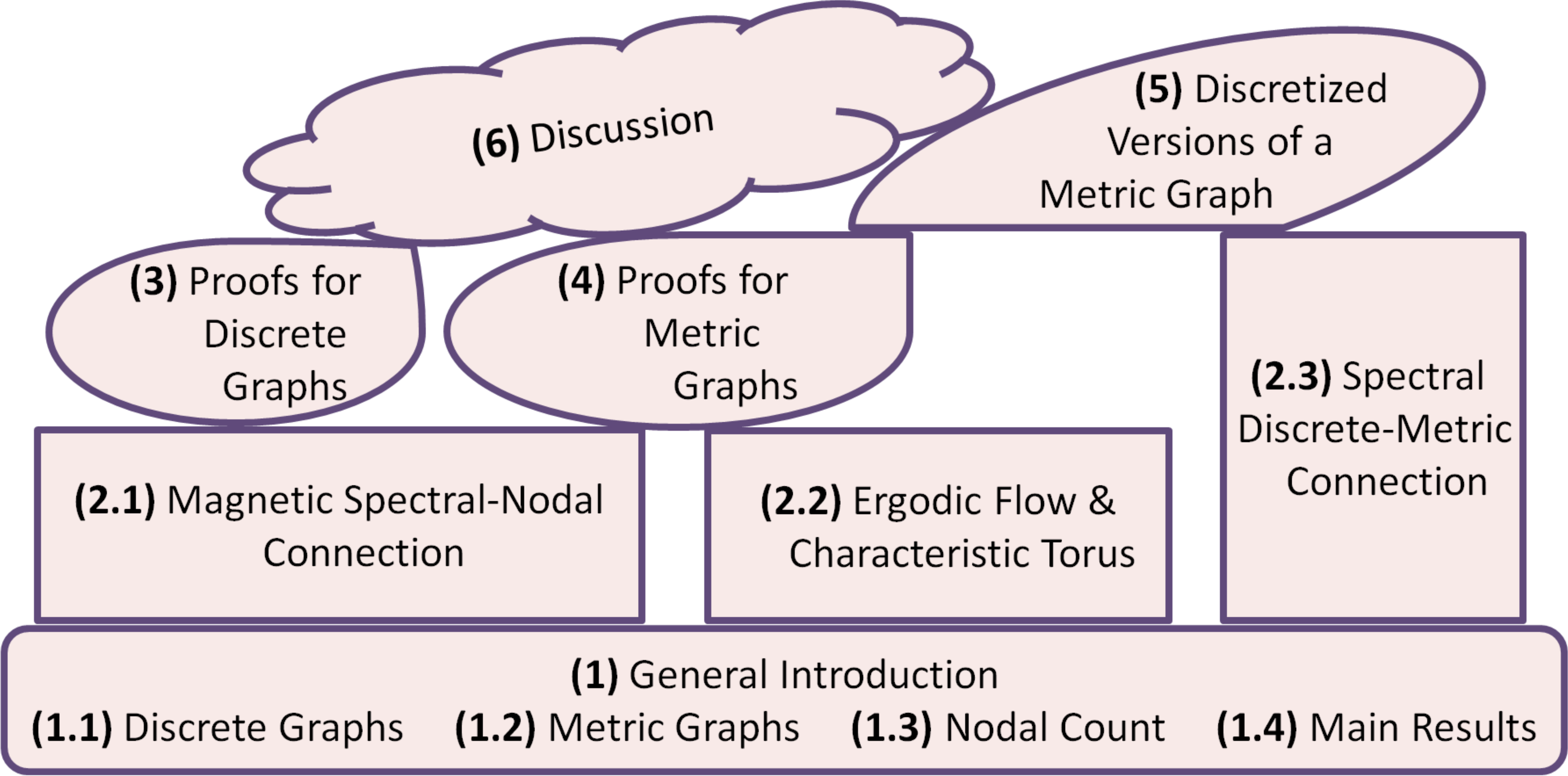}

\caption{Schematic diagram of section dependencies}
\label{fig:dependency_diagram}
\end{figure}

\subsection{Discrete graphs\label{sub:intro_discrete_graph}}

Let $\disgraph=\left(\V,\EE\right)$ be a graph with the sets of vertices
$\V$ and edges $\EE$. All graphs discussed in this paper are connected
and have a finite number of vertices and edges. Each edge $e\in\EE,\, e=\left\{ u,v\right\} $
connects a pair of vertices, $u,v\in\V$. We exclude edges which connect
a vertex to itself and do not allow two vertices to be connected by
more than one edge. For some purposes we would need to consider the
directions of the edges. We therefore also define the set $\E$ of
all graph edges, each appears twice with both its possible directions,
and hence $\left|\E\right|=2\left|\EE\right|$. We use the notation
$e=\left(u,v\right)$ to refer to a directed edge $e\in\E$ which
starts at $u$ and terminates at $v$ and denote the edge with the
reversed direction by $\hat{e}:=\left(v,u\right)$. For a vertex $v\in\V$,
its \emph{degree} equals the number of edges connected to it, i.e.,
$d_{v}:=\left|\left\{ u;\,\,\left\{ u,v\right\} \in\EE\right\} \right|$.\emph{
Functions }on the graph refer to $f:\V\rightarrow\R$ and we now present
the operators acting on such functions.

The \emph{normalized Laplacian} is
\begin{equation}
\lap_{u,v}^{(norm)}=\begin{cases}
-\nicefrac{1}{\sqrt{d_{u}d_{v}}} & \left(v,u\right)\in\E\\
0 & \left(v,u\right)\notin\E\\
1 & u=v
\end{cases}\label{eq:normalized_laplacian}
\end{equation}

The normalized Laplacian is a special case of the \emph{generalized
discrete Laplacian} also known as the \emph{discrete Schr�dinger operator},
a real symmetric matrix which obeys 
\begin{eqnarray}
\lap_{u,v}<0 &  & \textrm{if }\left\{ v,u\right\} \in\EE\nonumber \\
\lap_{u,v}=0 &  & \textrm{if }\left\{ v,u\right\} \notin\EE,\label{eq:generalized_laplacian}
\end{eqnarray}
and with no constraints on its diagonal values (which are sometimes
called in the literature on-site potentials).

These matrices have real eigenvalues which we denote by $\left\{ \lambda_{n}\right\} _{n=1}^{\left|\V\right|}$,
ordered increasingly, and the corresponding eigenvectors are denoted
by $\left\{ f_{n}\right\} _{n=1}^{\left|\V\right|}$. The spectrum
of the normalized Laplacian belongs to the interval $\left[0,2\right]$.
We refer the interested reader to many results concerning the spectra
of graph matrices and their eigenvectors which can be found in the
books \cite{Leydold_nodal,Chung_spectralgraph,CveRowSim_eigenspaces97}
and the references within.

The discrete Schr�dinger operator can be supplied with a \emph{magnetic
potential}, $\magp$, which is defined as $\magp:\E\rightarrow\R$
such that $\magp\left(e\right)=-\magp\left(\hat{e}\right)$. Assigning
such a potential to the operator amounts to the following changes

\[
\lap\left(\magp\right)\ :\ \C^{\left|\V\right|}\rightarrow\C^{\left|\V\right|}
\]

\[
\lap_{u,v}\left(\magp\right)=\lap_{u,v}\textrm{e}^{i\magp\left(u,v\right)},
\]
where $\lap_{u,v}$ are the entries of the previous (zero magnetic
potential) Laplacian. Note that this magnetic operator is still Hermitian
and hence has real eigenvalues. 

Given a cycle, $\gamma=\left(v_{1},\ldots,v_{n}\right)$, on the graph
we define the \emph{magnetic flux} through this cycle as
\[
\alpha_{\gamma}:=\left[\magp\left(v_{1},v_{2}\right)+\ldots+\magp\left(v_{n-1},v_{n}\right)+\magp\left(v_{n},v_{1}\right)\right]\,\textrm{mod}\,2\pi.
\]
The graph's cycles play an important role when introducing magnetic
potential. We denote the number of ``independent'' cycles on the
graph as 
\[
\beta:=\left|\EE\right|-\left|\V\right|+1
\]
 (assuming the graph is connected). This is also known as the first
Betti number, which is the dimension of the graph's first homology.
In the following we often fix some arbitrary $\beta$ independent
cycles on the graph, $\left\{ \gamma_{1},\ldots,\gamma_{\beta}\right\} $
and denote the magnetic fluxes through them by $\left\{ \alpha_{1},\ldots,\alpha_{\beta}\right\} $.
One can show that given two magnetic potentials, $\magp_{1}$ and
$\magp_{2}$ with the same values for the magnetic fluxes $\left\{ \alpha_{i}\right\} _{i=1}^{\beta}$,
their corresponding magnetic Laplacians, $\lap\left(\magp_{1}\right)$
and $\lap\left(\magp_{2}\right)$ are unitarily equivalent. This unitary
equivalence is also known as the gauge invariance principle and means
that the spectrum of the Laplacian is uniquely determined by the values
$\left\{ \alpha_{i}\right\} _{i=1}^{\beta}$ (but not so for its eigenfunctions).
We take advantage of this principle, notation-wise, and write the
Laplacian and its eigenvalues as functions of the magnetic parameters
(fluxes), $\lap\left(\magv\right)$ and $\lambda_{n}\left(\magv\right)$,
where $\magv:=\left(\alpha_{1},\ldots,\alpha_{\beta}\right)$. Note
that in the special case of a tree graph, the gauge invariance principle
means that all magnetic Laplacians are unitarily equivalent to the
zero magnetic Laplacian, $\lap\left(\zvec\right)$, and the eigenvalues
do not depend on the magnetic potential. More details on magnetic
operators on discrete graphs can be found in \cite{cdv_graphes98,LieLos_Duke93,Shubin_cmp94,Sunada_proceedings94}.

\subsection{Metric (quantum) graphs\label{sub:intro_metric_graph}}

A \emph{metric graph} is a discrete graph each of whose edges, $e\in\EE$,
is identified with a one dimensional interval, $[0,l_{e}]$, with
a positive finite length $l_{e}$. We use the notation $\lenv$ for
the vector whose entries are $\left\{ l_{e}\right\} _{e\in\EE}$.
We denote a metric graph by $\metgraph=\left(\V,\EE,\lenv\right)$.
We can then assign to each edge $e\in\E$ a coordinate, $x_{e}$,
which measures the distance along the edge from the starting vertex
of $e$. In particular, we have the following relation between coordinates
of reversed edges, $x_{e}+x_{\hat{e}}=l_{e}$. We denote a coordinate
by $x$, when its precise nature is unimportant. A function on the
graph is described by its restrictions to the edges, $\left\{ \left.f\right|_{e}\right\} _{e\in\E}$,
where $\left.f\right|_{e}:\left[0,l_{e}\right]\rightarrow\C$ and
we require $\left.f\right|_{e}(x_{e})=\left.f\right|_{\hat{e}}(l_{e}-x_{e})$,
for all $e\in\E$. Therefore, in effect, there is only a single function
assigned to each pair of edges, $e,\hat{e}$.

We equip the metric graphs with a self-adjoint differential operator
$\hamil$, the \emph{Hamiltonian} or \emph{metric Schr�dinger operator},
\begin{equation}
\mathcal{H}\left(\magp\right)\ :\ \left.f\right|_{e}(x_{e})\mapsto\left(\ui\frac{\ud}{\ud x_{e}}+\magp_{e}\left(x_{e}\right)\right)^{2}\left.f\right|_{e}\left(x_{e}\right)+V_{e}(x_{e})\left.f\right|_{e}\left(x_{e}\right),\label{eq:magnetic_metric_Schroedinger}
\end{equation}
where $V_{e}(x)$ is a real valued bounded and piecewise continuous
function which forms the \emph{electric potential}, and $\magp_{e}\in C^{1}\left(\left[0,l_{e}\right]\right)$
is called the \emph{magnetic potential} and it obeys $\magp_{e}\left(x_{e}\right)=-\magp_{\hat{e}}\left(x_{\hat{e}}\right)$.
It is most common to call this setting of a metric graph with a Schr�dinger
operator, a quantum graph. We will keep calling these graphs metric
graphs to distinguish them from their discrete counterpart which we
also equip here with an operator of a quantum nature.

To complete the definition of the operator we need to specify its
domain. We denote by $H^{2}(\metgraph)$ the following direct sum
of Sobolev spaces 
\begin{equation}
H^{2}(\metgraph):=\bigoplus_{e\in\E}H^{2}([0,l_{e}])\ .
\end{equation}
The graph's connectivity is expressed by matching the values of the
functions at the common vertices, thus dictating the operator's domain.
All matching conditions that lead to the operator (\ref{eq:magnetic_metric_Schroedinger})
being self-adjoint have been classified in \cite{Har_jpa00,KosSch_jpa99,Kuchment_wav04}.
It can be shown that under these conditions the spectrum of $\mathcal{H}$
is real and bounded from below \cite{Kuchment_wav04}. In addition,
since we only consider compact graphs, the spectrum is discrete and
with no accumulation points. We number the eigenvalues in the ascending
order and denote them (similarly to the discrete case) with $\left\{ \lambda_{n}\right\} _{n=1}^{\infty}$
and their corresponding eigenfunctions with $\left\{ f_{n}\right\} _{n=1}^{\infty}$.
We also use $k_{n}$, such that $\lambda_{n}=k_{n}^{2}$, and say
that $\left\{ k_{n}\right\} _{n=1}^{\infty}$ is the $k$-spectrum
of the graph.

As we wish to study the sign changes of the eigenfunctions, we would
require their continuity. The only matching conditions which assure
a self-adjoint operator and guarantee that the function is continuous
(at the vertices) are the so-called extended $\delta$-type conditions
at all the graph's vertices.

A function $f\in H^{2}(\metgraph)$ is said to satisfy the extended
$\delta$-type conditions at a vertex $v$ if
\begin{enumerate}
\item $f$ is continuous at $v\in\V$, i.e., 
\[
\forall e_{1},e_{2}\in\E_{v\,\,\,\,\,}\left.f\right|_{e_{1}}(0)=\left.f\right|_{e_{2}}(0),
\]
where $\E_{v}$ is the set of edges starting at $v$ (so that $x_{e}=0$
at $v$ if $e\in\E_{v}$). 
\item the outgoing derivatives of $f$ at $v$ satisfy 
\begin{equation}
\sum_{e\in\E_{v}}\left(\frac{\ud}{\ud x_{e}}-\ui\magp_{e}(0)\right)\left.f\right|_{e}\left(0\right)=\chi_{v}f(0),\quad\chi_{v}\in\R,\label{eq:delta_deriv_magnetic_case}
\end{equation}
 where $f(0)$ denotes the value of $f$ at the vertex (which is uniquely
defined due to the first part of the condition).
\end{enumerate}
In particular, the case $\chi_{v}=0$ is often referred to as \emph{Neumann
condition} (also called Kirchhoff or standard condition). We will
call a graph whose vertex conditions are all of Neumann type and whose
magnetic and electric potentials vanish everywhere, a \emph{Neumann
graph }and state our main results for such graphs. A Neumann graph
has $\lambda_{1}=0$ with multiplicity which equals the number of
graph's components (which is always one throughout this paper) and
their $k$-spectrum is therefore real and positive. Another useful
vertex condition is $\forall e\in\E_{v},\,\,\left.f\right|_{e}\left(0\right)=0$.
This is called a \emph{Dirichlet condition} at the vertex $v$, and
can be formally written as (\ref{eq:delta_deriv_magnetic_case}) with
$\chi_{v}=\infty$. Note that whenever a vertex exhibits a Dirichlet
condition, it effectively disconnects all edges connected to this
vertex. Similarly to the Neumann graph, a \emph{Dirichlet graph} is
obtained whenever all vertices are supplied with Dirichlet conditions.
The eigenvalues of such a graph are merely a union of spectra of its
disjoint edges (with Dirichlet conditions at their endpoints) and
are called \emph{Dirichlet eigenvalues}.

An important observation which plays a role in this paper is that
the spectrum and eigenfunctions of the graph are not affected if a
graph's edge is divided into two parts by introducing a new vertex
(of degree two) at an arbitrary point on this edge and supplying it
with Neumann conditions. We call the process (and outcome) of introduction
of any number of such vertices on the graph, graph's \emph{subdivision}
and keep in mind that the spectral properties are invariant with respect
to subdivision. Similarly to the definition of discrete graphs, we
also exclude edges which connect a vertex to itself and vertices which
are connected by more than two edges (reversed to each other). Note,
however, that this does not restrict the generality of our results
as any metric graph which does have self cycles or multiple edge can
be subdivided to eliminate these defects. 

The \emph{magnetic flux} through a cycle, $\gamma=\left(v_{1},\ldots,v_{n}\right)$,
is defined as 
\[
\alpha_{\gamma}:=\sum_{e\in\gamma}\int_{0}^{l_{e}}\magp_{e}\,\textrm{d}x_{e},
\]
where the notation $e\in\gamma$ means that either $e=\left(v_{i},v_{i+1}\right)$
for some $i$ or $e=\left(v_{n},v_{1}\right)$. The gauge invariance
principle introduced previously applies to the metric Schr�dinger
operator as well and allows us to write $\mathcal{H}\left(\magv\right)$
and $\lambda_{n}\left(\magv\right)$, when referring to the operator
and its eigenvalues. This obviously holds once some fixed choice $\beta$
cycles is made, and the notation $\left\{ \alpha_{i}\right\} _{i=1}^{\beta}$
is adapted to fluxes through these cycles, where $\magv:=\left(\alpha_{1},\ldots,\alpha_{\beta}\right)$.
Two good references for further reading on the general theory of metric
(quantum) graphs are \cite{BerKuc_quantum_graphs13,GS06}.

\subsection{The nodal count}

The main focus of this paper is the number of sign changes of eigenfunctions
on discrete and metric graphs. When counting sign changes we always
consider eigenfunctions of the zero magnetic potential operators,
as otherwise we are not guaranteed to have real valued eigenfunctions.
In addition, in order for the number of sign changes to be well defined
we assume the following.

\begin{assumption}\label{ass:generality_assumption}

The eigenvalue $\lambda_{n}$ is simple and the corresponding eigenfunction,
$f_{n}$ is different than zero on every vertex.

\end{assumption}We call $\lambda_{n}$ a \emph{generic eigenvalue}
if it satisfies assumption \ref{ass:generality_assumption} (correspondingly,
$f_{n}$ is called a \emph{generic }eigenfunction). This assumption
is generic with respect to various perturbations to the operator.
In the discrete case such perturbations include changing the non-zero
entries of the matrix $\lap$. For a metric graph, one may perturb
either the vertex conditions, the edge lengths or the electric potential.
Further discussions can be found in \cite{BanBerRazSmi_cmp12,Ber_arx11,Fri_ijm05},
where this assumption was used.

We now define sign changes (also known as nodal points) and nodal
domains and use the same notations for both metric and discrete graphs.
\begin{defn}
~
\begin{enumerate}
\item Let $f_{n}\in H^{2}(\metgraph)$ be generic and the $n^{\textrm{th}}$
eigenfunction of the Schr�dinger operator on a metric graph. The zeros
of $f_{n}$ form isolated points on the graph and they correspond
to the function's sign changes. Their set is denoted by $\Phi_{n}$.
\item Let $f_{n}\in\R^{\V}$ be generic and the $n^{\textrm{th}}$ eigenfunction
of the discrete Schr�dinger operator on a graph. We say that an edge
$e=\left\{ u,v\right\} $ forms a \emph{sign change (also nodal point)}
of the eigenfunction if $f_{n}\left(u\right)f_{n}\left(v\right)<0$.
The set of these edges is denoted by 
\[
\Phi_{n}:=\left\{ \left\{ u,v\right\} \in\EE\,|\,\, f_{n}\left(u\right)f_{n}\left(v\right)<0\right\} .
\]

\item The number of nodal points of $f_{n}$ (for both metric and discrete
graphs) is called the \emph{sign change count} or \emph{nodal point
count} of $f_{n}$ and is denoted by $\phi_{n}:=\left|\Phi_{n}\right|$.
\item $\metgraph\backslash\Phi_{n}$ ($\disgraph\backslash\Phi_{n}$) consists
of a few subgraphs disconnected one from the other. These subgraphs
are called the \emph{nodal domains} of $f_{n}$ and their number,
the \emph{nodal domain count}, is denoted by $\nc$. 
\end{enumerate}
\end{defn}
We also use the general term \emph{nodal count}, when it is either
clear or unimportant whether we count nodal points or nodal domains.
We remark that if assumption \ref{ass:generality_assumption} is not
satisfied due to zeros on vertices, the definitions above should be
modified. There are indeed alternative definitions of the nodal count
in such scenarios (see \cite{BOS08}), but their treatment is out
of the scope of this paper.

The known bounds on the nodal count are

\begin{eqnarray}
n-1 & \leq\fc\leq & n-1+\beta\label{eq:bounds_on_zero_count}\\
n-\beta & \leq\nc\leq & n.\label{eq:bounds_on_nodal_count}
\end{eqnarray}

In particular, for a tree graph where $\beta=0$, one obtains $\forall n\,\,\fc=n-1$.
This result for the interval is the famous Sturm's oscillation theorem
\cite{Stu_jmpa36,Stu_jmpa36a} and its generalization for trees was
done in \cite{PokPryObe_mz96,Schapotschnikow06}. The fact that \emph{discrete}
tree graphs also have this nodal count is proven in \cite{Berkolaiko07,DhaRam_prl85,Fiedler_cmj75}.
The upper bound of (\ref{eq:bounds_on_nodal_count}) (Courant bound)
is proven in \cite{DGLS01} for discrete graphs and in \cite{GnuSmiWeb_wrm03}
for metric graphs. For both metric and discrete graphs, the upper
bound of (\ref{eq:bounds_on_nodal_count}) also proves the upper bound
of (\ref{eq:bounds_on_zero_count}) since $\fc\leq\nc-1+\beta$. The
lower bound of (\ref{eq:bounds_on_nodal_count}) for both metric and
discrete graphs is proven in \cite{Berkolaiko07} and the same proof
works to prove the lower bound in (\ref{eq:bounds_on_zero_count})
(again for both kinds of graphs).

Two recent works which go further than the above bounds (and from
which the bounds (\ref{eq:bounds_on_zero_count}), (\ref{eq:bounds_on_nodal_count})
can be deduced) characterize the nodal count of an eigenfunction in
terms of a Morse index of a predefined energy function, \cite{BanBerRazSmi_cmp12,BerRazSmi_jpa12}.
These works led to a characterization of the nodal count in terms
of Morse indices of magnetic perturbations. This last result is an
important tool used in the proofs of the current paper and is described
in section \ref{sub:magnetic_spectral_nodal_connection}.

\subsection{\label{sub:main_results}The main results of the current paper}

We start by introducing the notation

\[
\nums:=\left\{ n;\,\,\lambda_{n}\,\textrm{satisfies assumption }\ref{ass:generality_assumption}\right\} .
\]
Namely, $\lambda_{n}$ is a \emph{generic eigenvalue} if $n\in\nums$.

The two main theorems of this paper are
\begin{thm}
\label{thm:main_theorem_discrete}Let $\disgraph$ be a graph supplied
with discrete Schr�dinger operator, all of whose eigenvalues are generic,
i.e., $\nums=\left\{ 1,\ldots,\left|\V\right|\right\} $. If its nodal
point count is such that for all $n\in\nums$,~$\phi_{n}=n-1$ then
$\disgraph$ is a tree graph.
\begin{thm}
\label{thm:main_theorem_metric}Let $\metgraph$ be a Neumann metric
graph with at least one generic eigenvalue greater than zero. Then
$\metgraph$ has infinitely many generic eigenvalues and exactly one
of the following holds\end{thm}
\begin{enumerate}
\item $\metgraph$ is a tree and for all $n\in\nums,\,\phi_{n}=n-1$ and
$\nc=n$.
\item $\metgraph$ is not a tree and each of the sets $\left\{ n\in\nums\,;\,\fc>n-1\right\} $,
$\left\{ n\in\nums\,;\,\nc<n\right\} $ is infinite.
\end{enumerate}
\end{thm}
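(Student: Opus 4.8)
The plan is to branch on the first Betti number $\beta=\left|\EE\right|-\left|\V\right|+1$ and to route everything through the magnetic--nodal connection of Section~\ref{sub:magnetic_spectral_nodal_connection}. That connection identifies, for every generic $n$, the \emph{nodal surplus} $\sur:=\fc-(n-1)\in\{0,1,\dots,\beta\}$ with the Morse index $M_n$ of the band function $\magv\mapsto\lambda_n(\magv)$ at the time-reversal point $\magv=\zvec$. Since a Neumann graph has real eigenfunctions, complex conjugation gives $\lambda_n(\magv)=\lambda_n(-\magv)$, so $\zvec$ is automatically a critical point on the flux torus $\mathbb{T}^{\beta}$ and $M_n$ is well defined; thus $\fc>n-1$ exactly when $\zvec$ fails to be a local minimum of $\lambda_n$, i.e.\ when the $n$-th eigenvalue is not diamagnetic. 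First I would settle the easy alternative: if $\beta=0$ the graph is a tree, $\mathbb{T}^{0}$ is a point, every $M_n$ vanishes, and the cited tree version of Sturm's theorem gives $\fc=n-1$ and $\nc=n$ for all generic $n$, which is case~(1). This is mutually exclusive with case~(2) because the underlying tree/non-tree dichotomy is exclusive and exhaustive, so exactly one alternative holds; it remains to treat $\beta\ge1$.

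Next I would produce infinitely many generic eigenvalues. The $k$-spectrum is cut out by the secular equation, and $k\mapsto(k\,l_e\bmod2\pi)_{e\in\EE}$ is a linear flow on $\mathbb{T}^{\left|\EE\right|}$ whose closure is a subtorus on which it is minimal, hence equidistributed (Barra--Gaspard). Genericity of an eigenvalue corresponds to the flow lying in an explicit full-measure region of the secular surface, since simplicity and non-vanishing at the vertices each fail only on a lower-dimensional subset. The hypothesis that one positive generic eigenvalue exists forces the subtorus to meet this region; minimality then makes it return infinitely often, yielding infinitely many generic eigenvalues.

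The core of the argument, and the step I expect to be hardest, is to show for $\beta\ge1$ that \emph{not almost all} eigenvalues are diamagnetic, i.e.\ that $\left\{ n\in\nums:\sur>0\right\} $ is infinite. I would prove the quantitative form that the Ces\`aro mean of $\sur$ equals $\beta/2$; since $\sur=M_n\ge0$ is integer-valued, a positive mean forces infinitely many $\sur>0$. Two ingredients combine: a symmetry of the secular surface that pairs minimum-type and maximum-type critical data at $\zvec$, rendering the distribution of $M_n$ symmetric about $\beta/2$; and the equidistribution above, which lets one replace the average over $n$ of the Morse index at the fixed flux $\zvec$ by an average over $\mathbb{T}^{\beta}$, to which the symmetry applies. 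The delicate points are that the secular flow samples the band structure faithfully enough to reproduce the symmetric Morse statistics, and that the finitely many non-generic $n$ do not perturb the mean; note, however, that only the infinitude of $\left\{ \sur>0\right\} $ is what alternative~(2) actually requires.

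Finally I would transfer the conclusion to nodal domains via the Euler relation $\nc=\fc+1-\beta+\beta_n'$, where $\beta_n'\ge0$ is the total first Betti number of $\metgraph\setminus\Phi_n$; equivalently $n-\nc=\beta-\sur-\beta_n'\in\{0,\dots,\beta\}$. Hence $\left\{ n\in\nums:\nc<n\right\} =\left\{ \sur+\beta_n'<\beta\right\} $, whose complement consists of the Courant-sharp indices $\nc=n$. Its infinitude follows from the same non-extremality principle applied in the dual, nodal-domain, guise (not almost all eigenfunctions can be Courant-sharp), which completes case~(2).
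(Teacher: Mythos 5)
Your outline covers only theorem \ref{thm:main_theorem_metric}; the statement also contains theorem \ref{thm:main_theorem_discrete}, which requires a separate argument. The paper's proof of the discrete case is short but of a different character: if all surpluses vanish then, by theorem \ref{thm:magnetic_morse_index}, every $\lambda_{n}\left(\magv\right)$ has a strict minimum at $\magv=\zvec$, so $\sum_{n}\lambda_{n}\left(\magv\right)=\textrm{trace}\left\{ \lap\left(\magv\right)\right\}$ would have a strict minimum there, contradicting the fact that the diagonal of $\lap\left(\magv\right)$ is magnetic-independent. Nothing in your proposal substitutes for this, and the metric machinery (secular function, ergodic flow) does not apply to a finite matrix.

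For the metric theorem your skeleton (magnetic--nodal connection, Barra--Gaspard flow, symmetry of the secular function) is the paper's, but you route the decisive step through a claim you do not prove: that the Ces\`aro mean of $\sur$ equals $\nicefrac{\beta}{2}$. Establishing that mean requires showing that the piercing measure induced on $\Sigma_{\zvec}$ by the flow is invariant under $\tvec\mapsto-\tvec$ and that degenerate points are negligible; you flag these as ``delicate points'' but leave them open, and the averaging you describe is over the flux torus $\mathbb{T}^{\beta}$ rather than over the length torus $\torus$, where the symmetry of lemma \ref{lem:symmetry_of_secular_function} actually lives. The paper avoids quantitative equidistribution entirely: theorem \ref{thm:infinitely_many_with_same_or_symmetric_surplus} shows that a \emph{single} generic eigenvalue of surplus $\sigma$ gives an open patch $\Xi\subset\Sigma_{\zvec}$ of constant Morse index $\sigma$ (via $\hes_{k}=-\hes_{F}/(\vec{\xi}\cdot\vec{\nabla}F)$) and, by the symmetry, a patch $-\Xi$ of Morse index $\beta-\sigma$; mere topological recurrence of the ergodic flow through both patches already yields infinitely many eigenvalues of each surplus, which is all that alternative (2) needs. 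Finally, your nodal-domain step invokes an unestablished ``dual non-extremality principle'' for Courant-sharpness; since no magnetic--nodal connection is known for $\nc$ directly, you must instead pass through the cited identity $\fc=\nc-1+\beta$ valid for all $n>n_{0}$ (equation (\ref{eq:nodal_domains_expressed_as_nodal_counts_high_in_the_spectrum}), i.e.\ your $\beta_{n}'$ vanishes eventually), which converts the nodal-domain claim back into the surplus statement already proved. With those three repairs the argument closes; as written it has genuine gaps.
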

These theorems solve the nodal inverse problem for a tree graph. It
is interesting to compare the assumptions and the conclusions of the
discrete case with the metric one. The metric theorem shows that a
generic nodal count cannot be almost like this of a tree - either
it equals a tree's nodal count or it differs from it for an infinite
subsequence. In addition, it allows to conclude that the graph is
a tree even if non-generic eigenvalues exist (by simply ignoring them)
as long as there is at least one positive generic eigenvalue. The
discrete theorem seems to assume more than the metric one, as it requires
that all the eigenvalues are generic and all of them have the nodal
count of a tree ($\phi_{n}=n-1$). Indeed, in the discrete case, ignoring
non-generic eigenvalues is not possible - the normalized Laplacian
on the triangle graph, for example, has a single generic eigenfunction
with the tree nodal count and two others which are non-generic. The
metric theorem also allows to conclude that the graph is a tree based
on its \emph{nodal domain} count. There is, however, no analogous
result for discrete graphs. As the final comparison point we mention
that the metric theorem solves the inverse problem only for the Laplacian
(as it is a Neumann graph and there is no potential) whereas the discrete
one applies to the more general Schr�dinger operator.

\section{Introducing Tools Needed for the Proofs\label{sec:tools_for_proofs}}

\subsection{The magnetic spectral-nodal connection\label{sub:magnetic_spectral_nodal_connection}}

This subsection is devoted to an important connection between the
nodal count and the stability of eigenvalues under magnetic perturbation.
Such a connection first appeared in \cite{Ber_arx11}, where Berkolaiko
proved it for discrete graphs. Shortly afterward, the same theorem
was reproved by Colin de Verdi\`{e}re who had also shown that the
theorem holds for the Hill operator - the metric Schr�dinger operator
on a single loop graph, \cite{cdv_arx12}. The proof of the theorem
for a general metric graph, due to Berkolaiko and Weyand, appears
in another manuscript of this same issue, \cite{BerWey_ptrs13}.

Before stating the relevant theorem, we need to introduce the following
notations.
\begin{enumerate}
\item The difference of the nodal count from its lower bound (also its value
in the tree case) is denoted
\[
\sur:=\fc-\left(n-1\right).
\]
Hence, we call it the \emph{nodal surplus} following \cite{Ber_arx11}
(it was also called \emph{nodal defect} in \cite{cdv_arx12}).
\item The Hessian of the eigenvalue $\lambda$ with respect to magnetic
parameters at the point $\magv=\zvec$ is denoted by 
\[
\left[\hes_{\lambda}\left(\zvec\right)\right]_{i,j}:=\left.\frac{\partial^{2}\lambda}{\partial\alpha_{i}\partial\alpha_{j}}\right|_{\magv=\zvec}.
\]
When $\magv=\zvec$ is a critical point of $\lambda\left(\magv\right)$,
the Morse index is defined as the number of negative eigenvalues of
$\hes_{\lambda}\left(\zvec\right)$ and it is denoted by $\morse_{\lambda}\left(\zvec\right)$.
\end{enumerate}
We collect below the main results of the papers \cite{Ber_arx11,BerWey_ptrs13,cdv_arx12}
into a single theorem which is applicable for both metric and discrete
graphs.
\begin{thm}
\label{thm:magnetic_morse_index}\cite{Ber_arx11,BerWey_ptrs13,cdv_arx12}
Let $\disgraph$ $\left(\metgraph\right)$ be a discrete (metric)
graph supplied with a\emph{ }discrete (metric) Schr�dinger operator.
Let $\lambda_{n}\left(\magv\right)$ and $f_{n}\left(\magv\right)$
be an eigenvalue and a corresponding eigenfunction of the magnetic
operator such that $\lambda_{n}\left(\zvec\right)$ and $f_{n}\left(\zvec\right)$
satisfy assumption \ref{ass:generality_assumption}. Then the following
holds.
\begin{enumerate}
\item \label{enu:magnetic_morse_index_part_1}The point $\magv=\zvec$ is
a critical point of the function $\lambda_{n}\left(\magv\right)$.
\item \label{enu:magnetic_morse_index_part_2}The critical point $\magv=\zvec$
is non-degenerate.
\item \label{enu:magnetic_morse_index_part_3}The nodal surplus, $\sigma_{n}$,
of the eigenfunction $f_{n}\left(\zvec\right)$ is equal to the Morse
index of this critical point, 
\[
\sur=\morse_{\lambda_{n}}\left(\zvec\right).
\]

\end{enumerate}
\end{thm}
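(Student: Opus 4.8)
The plan is to establish the three parts in order, deriving the first from a symmetry and then treating non-degeneracy and the Morse-index identity together through second-order perturbation theory.

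For part \ref{enu:magnetic_morse_index_part_1} I would exploit the time-reversal symmetry built into the magnetic operator. Complex conjugation is antiunitary and, because each hopping phase $\ue^{\ui\magp(u,v)}$ is sent to its inverse (equivalently, $\magp\mapsto-\magp$), it intertwines the operator at flux $\magv$ with the one at flux $-\magv$; in the metric case the same holds for $\hamil(\magv)$ through the identical phase flip. Hence the two spectra coincide and $\lambda_n(\magv)=\lambda_n(-\magv)$, so each eigenvalue is an even function of the fluxes. Since $\lambda_n(\zvec)$ is simple by assumption \ref{ass:generality_assumption}, analytic (Rellich--Kato) perturbation theory provides a real-analytic branch $\magv\mapsto\lambda_n(\magv)$ together with an analytic family of eigenfunctions on a neighbourhood of $\zvec$. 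An even differentiable function has vanishing gradient at its centre of symmetry, so $\nabla\lambda_n(\zvec)=\zvec$ and $\zvec$ is a critical point.

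For parts \ref{enu:magnetic_morse_index_part_2} and \ref{enu:magnetic_morse_index_part_3} I would compute the Hessian $\hes_{\lambda_n}(\zvec)$ explicitly from the analytic family of part \ref{enu:magnetic_morse_index_part_1}. Writing the magnetic operator as a phase-dependent perturbation of $\lap(\zvec)$ and solving the first-order eigenfunction equation via the reduced resolvent, the entries of $\hes_{\lambda_n}(\zvec)$ become an explicit $\beta\times\beta$ bilinear form in $f_n(\zvec)$ and its first-order correction, indexed by the fixed independent cycles. The goal is then to show, through a congruence (Sylvester inertia) argument, that this form is equivalent to a combinatorial matrix built from the eigenfunction's values across the chosen cycles, whose number of negative eigenvalues equals the number of sign changes in excess of $n-1$, i.e. the nodal surplus $\sur$. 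The genericity in assumption \ref{ass:generality_assumption} (simplicity together with non-vanishing of $f_n$ on vertices) is exactly what should force this congruent matrix to have trivial kernel, yielding the non-degeneracy of part \ref{enu:magnetic_morse_index_part_2} simultaneously with the index count.

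I expect part \ref{enu:magnetic_morse_index_part_3} to be the main obstacle, since matching the analytic Morse index to the combinatorial sign-change count is where the real content lies. A cleaner route I would pursue in parallel is an eigenvalue-tracking argument: inserting a flux of $\pi$ on a cycle is, by gauge invariance, equivalent to flipping the sign of a single hopping term along that cycle, so that the passage of fluxes through $\pi$ encodes the sign changes of $f_n$; counting the (transversal, by genericity) eigenvalue crossings along such a deformation then reproduces $\sur$ and, by the same transversality, the non-degeneracy of the critical point. For the metric graph $\metgraph$ I would reduce to this discrete picture via a subdivision and the secular/scattering description, or else invoke the nodal-to-Morse-index characterization of \cite{BanBerRazSmi_cmp12} and translate its energy-function Hessian into the magnetic one; ensuring that these two Morse indices agree, with all degeneracies controlled by genericity, is the most delicate point.
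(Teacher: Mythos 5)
This theorem is not proved in the paper at all: it is imported verbatim from \cite{Ber_arx11,BerWey_ptrs13,cdv_arx12}, so there is no internal proof to compare your proposal against. Judged on its own terms, your argument for part (\ref{enu:magnetic_morse_index_part_1}) is correct and is in fact the standard one: complex conjugation intertwines the operator at $\magv$ with the one at $-\magv$, so $\lambda_{n}(\magv)=\lambda_{n}(-\magv)$, and Rellich--Kato analyticity at the simple eigenvalue $\lambda_{n}(\zvec)$ makes the gradient of this even function vanish at $\zvec$. The paper's own remark following lemma \ref{lem:symmetry_of_secular_function} confirms that this is exactly how the cited works obtain the critical point.

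For parts (\ref{enu:magnetic_morse_index_part_2}) and (\ref{enu:magnetic_morse_index_part_3}), however, what you have written is a plan rather than a proof, and the plan stops precisely where the content of the theorem begins. Second-order perturbation theory does give $\hes_{\lambda_{n}}(\zvec)$ as an explicit bilinear form involving the reduced resolvent, but the assertion that this form is congruent to a combinatorial matrix ``whose number of negative eigenvalues equals the number of sign changes in excess of $n-1$'' is exactly the identity $\sur=\morse_{\lambda_{n}}(\zvec)$ to be proved; no mechanism is offered for it, and the reduced-resolvent expression involves the entire spectrum, not just the sign pattern of $f_{n}$ on the cycles. The cited proofs get around this by a genuinely different device: a duality between the magnetic phases and parameters that cut the cycle edges, combined with the nodal--Morse characterization of \cite{BanBerRazSmi_cmp12} (a route you mention only as a fallback, and whose translation into the magnetic Hessian is again the hard step). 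Your alternative ``flux through $\pi$'' tracking argument also does not close the gap: spectral flow along a path from $\zvec$ to a $\pi$-flux configuration records information about the two endpoints, not the inertia of the Hessian at $\zvec$, and relating the two is the same missing global argument. Finally, non-degeneracy in part (\ref{enu:magnetic_morse_index_part_2}) is asserted to follow because genericity ``should force'' the kernel to be trivial; in the literature this is one of the delicate points and requires its own argument, not a consequence of simplicity and non-vanishing on vertices alone.
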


\subsection{Ergodic flow on the characteristic torus\label{sub:ergodic_motion_on_the_graph_torus}}

It is well known that eigenvalues of a metric graph are given, with
their multiplicities, as the zeros of a secular function \cite{KS97,KotSmi_ap99},
\[
\left\{ k^{2};\,\,\tilde{F}\left(k;\,\lenv;\,\magv\right)=0\right\} ,
\]
where

\begin{equation}
\tilde{F}\left(k;\,\lenv;\,\magv\right):=\det\left(\ue^{-\nicefrac{\ui}{2}\left(\mathbf{\magmat}\left(\magv\right)+k\mathbf{\lenmat}\left(\lenv\right)\right)}\right)\det\left(\scatmat\left(k\right)\right)^{-\nicefrac{1}{2}}\det\left(\id-\ue^{\ui\left(\mathbf{\magmat}\left(\magv\right)+k\mathbf{\lenmat}\left(\lenv\right)\right)}\scatmat\left(k\right)\right),\label{eq:secular_function}
\end{equation}
and $\magmat,\,\lenmat\,\textrm{\,\ and }\scatmat$ are square matrices
of dimension $\left|\E\right|$ and contain the information about
the magnetic fluxes, edge lengths and edge connectivity, respectively.
Exact details on the structure of those matrices appear in \cite{GS06,KotSmi_ap99},
and we just state here the necessary facts which we use later on:
\begin{enumerate}
\item $\lenmat$ is a diagonal matrix of the form 
\[
\lenmat=\textrm{diag}\left\{ l_{e}\right\} _{e\in\E}.
\]

\item For a Neumann graph $\scatmat$ is a constant unitary matrix which
does not depend on $k$.
\item $\magmat$ is a diagonal matrix which is linear in $\magv$.
\item $\tilde{F}$ is a real valued function.
\end{enumerate}
Let $\metgraph=\left(\V,\EE,\lenv\right)$ be a metric graph. We write
its edge lengths as the following linear combinations.
\begin{equation}
\forall e\in\EE\,\,\, l_{e}=\sum_{i\in I}r_{i}^{\left(e\right)}\xi_{i},\label{eq:edge_lengths_as_linear_combinations}
\end{equation}
where all $r_{i}^{\left(e\right)}$ are rational numbers and $\left\{ \xi_{i}\right\} _{i\in I}$
is a set of \emph{incommensurate} real numbers, i.e., they are linearly
independent over the rationals. For example, if all edge lengths are
incommensurate then $\left|I\right|=\left|\EE\right|$ and the set
$\left\{ \xi_{i}\right\} _{i\in I}$ can be chosen to consist of the
edge lengths. On the other extreme, if all ratios of edge lengths
are rational, then $\left|I\right|=1$ and one can choose $\xi_{1}$
to equal any of the edge lengths (or any rational multiple of it).
The relations between the edge lengths of the graph, $\left\{ l_{e}\right\} _{e\in\EE}$,
and the parameters $\left\{ \xi_{i}\right\} _{i\in I}$ will play
an important role later on and we thus define the \emph{length map
}as 
\[
\lenmap:\R^{I}\rightarrow\R^{\EE}
\]

\begin{equation}
\left[\lenmap\left(\tvec\right)\right]_{e}=\sum_{i\in I}r_{i}^{\left(e\right)}x_{i}.\label{eq:length_map}
\end{equation}
This map depends on the specific edge lengths of $\metgraph$ (even
the dimension of its domain depends on that) and on the specific choice
of values for $\left\{ \xi_{i}\right\} _{i\in I}$. 

We now describe a method introduced by Barra and Gaspard \cite{BarGas_jsp00}
who related the graph eigenvalues to the Poincar\'{e} return times
of a flow to a surface defined by the zero level set of the secular
function (\ref{eq:secular_function}). We present their method using
our length map and start by redefining the secular function

\begin{eqnarray}
F\,:\,\R^{I}\times\R^{\beta} & \rightarrow & \R\\
F\left(\tvec;\,\magv\right) & := & \tilde{F}\left(1;\,\lenmap\left(\tvec\right);\,\magv\right).\label{eq:secular_equation_defined_on_torus}
\end{eqnarray}

We point out some properties of $F$, which can be deduced from (\ref{eq:secular_function}).
\begin{enumerate}
\item $F$ is differentiable.
\item \label{enu:secular_function_second_property}$F\left(k\vec{\xi};\,\magv\right)=\tilde{F}\left(k;\,\lenmap\left(\vec{\xi}\right);\,\magv\right)$
since $\lenmap$ is homogeneous and $\tilde{F}$ contains the parameter
$k$ only in the product $k\lenv=k\lenmap\left(\vec{\xi}\right)$.
\item $F\left(\tvec;\,\magv\right)$ is periodic in each of the entries
of $\tvec$. The period depends on the specific entry, $x_{i}$, and
is some rational multiple of $2\pi$.
\end{enumerate}
The last property allows us to (re)define the function $F\left(\cdot;\,\magv\right)$
on an $\left|I\right|$-dimensional torus, $\torus$, with sides depending
on the periodicity of $F$ with respect to its parameters, $\left\{ x_{i}\right\} _{i\in I}$.
Namely, 
\[
F\,:\,\torus\times\R^{\beta}\rightarrow\R,
\]
and from now on whenever $F$ is mentioned, its $\tvec$ variable
is taken modulus this torus periodicity, even if this is not explicitly
written. Property (\ref{enu:secular_function_second_property}) allows
us to characterize the graph $k$-eigenvalues as 
\[
\left\{ k\left(\magv\right);\,\, F\left(k\vec{\xi};\,\magv\right)=0\right\} ,
\]
where $\vec{\xi}=\lenmap^{-1}\left(\lenv\right)$ is the vector with
incommensurate entries chosen above. We therefore may define the following
flow on the torus
\begin{equation}
\tvec\left(k\right):=k\vec{\xi},\label{eq:flow_definition}
\end{equation}
and the surface 
\begin{equation}
\Sigma_{\magv}:=\left\{ F\left(k\vec{\xi};\,\magv\right)=0\right\} ,\label{eq:level_set_of_secular_function}
\end{equation}
so that the $k$-spectrum equals the times (i.e., the $k$ values)
for which the flow $\tvec\left(k\right)$ pierces $\Sigma_{\magv}$.
\begin{rem}
\label{rem:eigenfunctions_on_the_torus}The eigenfunctions which correspond
to the graph eigenvalues depend both on the point on the torus, $k\vec{\xi}\in\torus$,
and on the specific values of $k$ and $\lenv$. The restriction of
the eigenfunction to the graph vertices, however, is a continuous
function solely of $k\vec{\xi}\in\torus$ (see e.g., \cite{KS97,KotSmi_ap99}).
This last observation is exploited in sections \ref{sec:inverse_theorem_metric}
and \ref{sec:discretized_vesrions}.
\end{rem}
We end by noting that as the entries of $\vec{\xi}$ are linearly
independent over $\Q$, the flow (\ref{eq:flow_definition}) is ergodic
on $\torus$. This ergodicity is the reason for making the reduction
from the set of edge lengths, $\left\{ l_{e}\right\} _{e\in\EE}$,
to $\left\{ \xi_{i}\right\} _{i\in I}$. We could have defined the
flow on an $\left|\EE\right|$-dimensional torus, but then the flow
would not necessarily fill the whole torus - it would actually fill
a linear subspace given by $\lenmap\left(\torus\right)$ (modulo the
$\left|\EE\right|$-dimensional torus).

\subsection{The spectral connection between metric and discrete graphs\label{sub:discrete_metric_connection}}

An interesting connection exists between the spectrum of an \emph{equilateral}
metric graph,\emph{ }a graph whose all edge lengths are equal, and
the spectrum of the discrete graph which shares the same connectivity.
This connection is usually stated in terms of the spectrum of the
\emph{transition matrix}, 
\begin{equation}
\trans_{u,v}=\begin{cases}
\nicefrac{1}{d_{u}} & \left(v,u\right)\in\E\\
0 & \left(v,u\right)\notin\E
\end{cases}.\label{eq:transition_matrix}
\end{equation}
The transition matrix is sometimes referred to as the \emph{difference
operator} or even the discrete Laplace operator, but we will not use
it here to avoid confusion. One should note that the transition matrix
is not symmetric (and thus does not fall under the definition of the
generalized Laplacian), but it is closely related to the normalized
Laplacian since
\begin{equation}
\trans=\id-\degmat^{\nicefrac{1}{2}}\lap^{\left(norm\right)}\degmat^{-\nicefrac{1}{2}},\label{eq:transition_and_normalized_laplacian}
\end{equation}
where $\degmat=\textrm{diag}\left\{ d_{v}\right\} _{v\in\V}$ is a
diagonal matrix which contains all vertex degrees. If $\lambda,f$
are an eigenvalue and an eigenvector of $\lap^{\left(norm\right)}$,
then $\trans$ has $1-\lambda$ and $\degmat^{\nicefrac{1}{2}}f$
as its corresponding eigenpair. We use this connection between both
operators to slightly rephrase a known result which usually refers
to the spectrum of the transition matrix.
\begin{thm}
\label{thm:discrete_metric_spectral_connection} \cite{Cat_mm97,Kuchment_wav04,Pan_lmp06,Bel_laa85,BelMug_arx12}
Let $\disgraph$ be a discrete graph and $\metgraph$ be a metric
graph with the same connectivity and such that $\forall e\,\, l_{e}=1$.
Consider the normalized Laplacian, $\lap^{\left(norm\right)}$, on
$\disgraph$ and the metric Laplacian with Neumann conditions on $\metgraph$.
Equip both operators with magnetic fluxes for some choice of $\beta$
cycles on the graph (the same choice for both $\disgraph$ and $\metgraph$).
Let $\mu\left(\magv\right)\notin\left\{ 0,2\right\} $ be an eigenvalue
of $\disgraph$ and $f\left(\magv\right)$ the corresponding eigenvector.
Then
\begin{enumerate}
\item All values in the infinite set $\left\{ \left(\arccos\left[1-\mu\left(\magv\right)\right]\right)^{2}\right\} $
are eigenvalues of the magnetic metric Schr�dinger operator on $\metgraph$.
We consider here $\arccos$ as a multivalued function, $\arccos:\left[-1,1\right]\rightarrow\left[0,\infty\right)$.
\item When $\magv=\zvec$, the other eigenvalues of the metric Schr�dinger
operator are the Dirichlet eigenvalues, all of which belong to the
set $\left\{ \left(\pi n\right)^{2}\right\} _{n\in\Z}$. 
\item An eigenfunction corresponding to any of the eigenvalues $\left(\arccos\left[1-\mu\left(\magv\right)\right]\right)^{2}$
equals to $\degmat^{\nicefrac{1}{2}}f\left(\magv\right)$ when restricted
to $\metgraph$'s vertices.\label{enu:discrete_metric_thm_eigenfunction_on_vertices} 
\end{enumerate}
\end{thm}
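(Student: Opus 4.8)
\emph{Proof plan.} The plan is to reduce the metric eigenvalue problem on the equilateral graph $\metgraph$ to a discrete eigenvalue problem for the magnetic transition matrix, by solving the (gauged) free equation explicitly on each unit edge and then re-expressing the Neumann vertex conditions as a discrete equation for the vertex values.

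Fix $k$ with $\sin k\neq0$. On an edge $e=(u,v)$ of length $1$ I would first remove the magnetic potential by the gauge transformation $\psi_e\mapsto\exp\bigl(-\ui\int_0^{x_e}\magp_e\bigr)\psi_e$, which turns the operator in (\ref{eq:magnetic_metric_Schroedinger}) (with $V\equiv0$) into $-\tilde\psi_e''=k^2\tilde\psi_e$, while changing the two boundary data into the vertex value $g(u)$ at $x_e=0$ and $\ue^{-\ui\theta_e}g(v)$ at $x_e=1$, where $\theta_e:=\int_0^1\magp_e\,\ud x_e$. Since $\sin k\neq0$, these two values determine $\tilde\psi_e$ uniquely as a combination of $\cos(kx_e)$ and $\sin(kx_e)$, and the gauge-covariant outgoing derivative at $u$ then equals $k\,\bigl[\ue^{-\ui\theta_e}g(v)-\cos k\,g(u)\bigr]/\sin k$. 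This single-edge computation is the crux.

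Next I would impose the extended $\delta$-type condition (\ref{eq:delta_deriv_magnetic_case}) with $\chi_v=0$ at each vertex $u$: summing the outgoing derivatives over $e\in\E_u$ and cancelling the common factor $k/\sin k$ collapses the matching condition into
\[
\frac{1}{d_u}\sum_{v\sim u}\ue^{-\ui\theta_{(u,v)}}g(v)=\cos k\;g(u),
\]
so the vector of vertex values is an eigenvector of the magnetic transition matrix $\trans(\magv)$ with eigenvalue $\cos k$; conversely any such discrete eigenvector reconstructs a genuine metric eigenfunction edge by edge. Through $\trans=\id-\degmat^{\nicefrac{1}{2}}\lap^{\left(norm\right)}\degmat^{-\nicefrac{1}{2}}$ (and its magnetic analogue) the eigenvalue $\cos k$ corresponds to the eigenvalue $1-\cos k$ of $\lap^{\left(norm\right)}(\magv)$; equating $1-\cos k=\mu(\magv)$ gives $k=\arccos[1-\mu(\magv)]$, and reading $\arccos$ as multivalued produces the infinite set $\bigl\{(\arccos[1-\mu(\magv)])^2\bigr\}$ of part (1), while the same correspondence identifies the eigenfunction's vertex values with $\degmat^{\nicefrac{1}{2}}f(\magv)$, which is part (\ref{enu:discrete_metric_thm_eigenfunction_on_vertices}). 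The hypothesis $\mu\notin\{0,2\}$ is exactly what forces $\cos k\in(-1,1)$, hence $\sin k\neq0$, so that the edge reconstruction is valid.

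For part (2) I would set $\magv=\zvec$ and argue that the dichotomy is sharp: an eigenvalue $k^2$ with $\sin k\neq0$ must have a nonvanishing vertex vector (otherwise $\tilde\psi_e\equiv0$ on every edge and $\psi\equiv0$), hence it is captured by part (1). Every remaining eigenvalue therefore satisfies $\sin k=0$, i.e.\ $\lambda=(\pi n)^2$; at such $k=\pi n$ the edge function $\sin(\pi n x_e)$ vanishes at both endpoints, and suitable combinations supported around the graph's cycles furnish eigenfunctions that are Dirichlet at every vertex, which are precisely the Dirichlet eigenvalues. I expect the main obstacle to be the careful bookkeeping of the gauge and flux conventions in the single-edge computation, so that the vertex phases assemble into exactly $\trans(\magv)$, together with the treatment of the Dirichlet case at $k=\pi n$, where one must isolate the genuinely vertex-vanishing eigenfunctions (and account for their multiplicities) from the $\mu\in\{0,2\}$ endpoints excluded in part (1).
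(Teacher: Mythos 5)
The paper does not actually prove this theorem---it is imported wholesale from the cited references---so there is no internal proof to compare against; your argument is precisely the standard derivation underlying those references (von Below, Cattaneo, Pankrashkin): gauge away the magnetic potential on each unit edge, express the edge solution through its endpoint values, and collapse the Kirchhoff conditions into the eigenvalue equation $\cos k$ for the magnetic transition matrix, whence $1-\cos k=\mu$ and the vertex values $\degmat^{\nicefrac{1}{2}}f$. The plan is sound, including the dichotomy argument for part (2); the only point requiring care is the one you already flag, namely fixing the sign convention relating the edge phases $\theta_e$ to the discrete magnetic potential so that the fluxes through the chosen cycles (and hence the eigenfunction identification in part (3)) match exactly.
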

The content of the theorem for the zero magnetic potential case appears
in \cite{Cat_mm97,Kuchment_wav04,Bel_laa85,BelMug_arx12}, where they
mostly treat Neumann vertex conditions (except in \cite{BelMug_arx12}
where the so called anti-Kirchhoff conditions are treated as well).
A more general derivation which includes electric and magnetic potentials
as well as $\delta$-type conditions appears in \cite{Pan_lmp06}.
In most works above the results are stated in terms of the transition
matrix, (\ref{eq:transition_matrix}), but we prefer to relate to
the normalized Laplacian as theorem \ref{thm:magnetic_morse_index}
applies to it.

\section{Proofs for Discrete Graphs\label{sec:inverse_theorem_discrete}}

The next theorem brings one of the two results which inspire this
paper's title. It is stated and proved here for discrete graphs, whereas
the metric case appears in the next section.
\begin{proof}
[Proof of theorem  \ref{thm:main_theorem_discrete}] Assume by contradiction
that $\disgraph$ is not a tree. Namely $\beta>0$, and we may therefore
supply the graph with a magnetic potential and apply theorem \ref{thm:magnetic_morse_index}.
From the assumption in our theorem we get for the nodal surplus 
\[
\forall n\,\,\sur=\phi_{n}-\left(n-1\right)=0.
\]
We apply theorem \ref{thm:magnetic_morse_index} for $\lambda_{n}\left(\magv\right)$
- the third part of the theorem allows to conclude that $\lambda_{n}\left(\magv\right)$
has a minimum at $\magv=\zvec$ and the second part shows that this
minimum is strict. Therefore, the eigenvalue sum, $\sum_{n}\lambda_{n}\left(\magv\right)$,
also has a strict minimum at $\magv=\zvec$ . However, since the diagonal
entries of the Laplacian do not depend on the magnetic parameters,
$\magv$, we get that $\textrm{trace}\left\{ \lap\left(\magv\right)\right\} $
is a constant function of $\magv$. Hence we arrive at a contradiction,
due to $\sum_{n}\lambda_{n}\left(\magv\right)=\textrm{trace}\left\{ \lap\left(\magv\right)\right\} $.
\end{proof}
The essence of the proof above is to show that the zero sequence is
not a valid candidate as a nodal surplus sequence. This is done by
identifying $\textrm{trace}\left\{ \lap\left(\magv\right)\right\} $
as a spectral invariant independent of the magnetic potential. We
wish to point out similar results which are obtained from such a method.
An immediate next step would be to observe that $\textrm{trace}\left\{ \lap^{2}\left(\magv\right)\right\} $
is a constant function of $\magv$ as well. Computing its Hessian
and expressing it in terms of the eigenvalues and their Hessians,
$\hes_{\lambda_{n}}$, gives
\begin{equation}
\sum_{n=1}^{\left|\V\right|}\lambda_{n}\left(\zvec\right)\hes_{\lambda_{n}}\left(\zvec\right)=\zmat,\label{eq:combination_of_hessian_equals_zero_2}
\end{equation}
where we have used $\forall n\,\forall i\,\,\frac{\partial}{\partial\alpha_{i}}\lambda_{n}\left(\zvec\right)=0$,
which we get from theorem \ref{thm:magnetic_morse_index} (part (\ref{enu:magnetic_morse_index_part_1})).
Similarly, the magnetic invariance of $\textrm{trace}\left\{ \lap\left(\magv\right)\right\} $
gives 
\begin{equation}
\sum_{n=1}^{\left|\V\right|}\hes_{\lambda_{n}}\left(\zvec\right)=\zmat.\label{eq:combination_of_hessian_equals_zero_1}
\end{equation}
Combining (\ref{eq:combination_of_hessian_equals_zero_1}) and (\ref{eq:combination_of_hessian_equals_zero_2})
allows to prove the following.
\begin{prop}
\label{thm:impossible_surplus_seq_1-1}Let $\disgraph$ be a graph
with $\beta$ cycles supplied with discrete Schr�dinger operator such
that all of its eigenvalues are generic. Its nodal count cannot be
of the form 
\[
\phi_{n}=\begin{cases}
n-1 & \,\,\, n\leq m\\
n-1+\beta & \,\,\, n>m
\end{cases}
\]
for any $m$.\end{prop}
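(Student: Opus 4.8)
The plan is to assume such a nodal count exists and derive a contradiction from the two magnetic trace identities (\ref{eq:combination_of_hessian_equals_zero_1}) and (\ref{eq:combination_of_hessian_equals_zero_2}). First I would pass from the nodal point count to the nodal surplus: the hypothesized form gives $\sur=\fc-(n-1)=0$ for $n\le m$ and $\sur=\beta$ for $n>m$. By part (\ref{enu:magnetic_morse_index_part_3}) of Theorem \ref{thm:magnetic_morse_index}, $\sur$ equals the Morse index $\morse_{\lambda_n}(\zvec)$ of the $\beta\times\beta$ Hessian $\hes_{\lambda_n}(\zvec)$, and by part (\ref{enu:magnetic_morse_index_part_2}) this Hessian is non-degenerate. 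Hence for $n\le m$ the Hessian has neither a negative nor a zero eigenvalue, i.e. it is positive definite; and for $n>m$ all $\beta$ of its eigenvalues are negative, i.e. it is negative definite. This definiteness is the crucial structural input, and extracting it cleanly from the Morse-index statement is the first thing to get right.

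Next I would exploit the two identities simultaneously. Fixing an arbitrary nonzero $\vec v\in\R^{\beta}$ and writing $q_n:=\vec v^{\,T}\hes_{\lambda_n}(\zvec)\vec v$, the quadratic forms of (\ref{eq:combination_of_hessian_equals_zero_1}) and (\ref{eq:combination_of_hessian_equals_zero_2}) read $\sum_n q_n=0$ and $\sum_n\lambda_n(\zvec)\,q_n=0$. The definiteness above forces $q_n>0$ for $n\le m$ and $q_n<0$ for $n>m$. Since all eigenvalues are generic, hence simple, we have $\lambda_m(\zvec)<\lambda_{m+1}(\zvec)$, so I may choose a constant $c$ strictly between them. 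Subtracting $c$ times the first identity from the second gives $\sum_n\bigl(\lambda_n(\zvec)-c\bigr)q_n=0$. For $n\le m$ the factor $\lambda_n(\zvec)-c$ is negative while $q_n$ is positive, and for $n>m$ the factor is positive while $q_n$ is negative; every summand is therefore strictly negative, so the sum cannot vanish, which is the contradiction.

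The remaining cases are the extreme ones $m=0$ and $m=\left|\V\right|$, where the shift argument has no intermediate $c$ available; these I would dispose of directly from (\ref{eq:combination_of_hessian_equals_zero_1}) alone, since then every $q_n$ has the same strict sign and $\sum_n q_n=0$ is already impossible (this also tacitly uses $\beta\ge 1$, i.e. that $\disgraph$ is not a tree, which is where the statement has content). I expect the main obstacle to be conceptual rather than computational: recognizing that the two identities say that the signed ``measure'' $\sum_n q_n\,\delta_{\lambda_n(\zvec)}$ has vanishing zeroth and first moments, and that the sign pattern dictated by the Morse indices --- positive weight on the small eigenvalues and negative weight on the large ones --- is incompatible with this, the shift by $c$ being exactly the device that makes the incompatibility manifest. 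The only technical points to verify along the way are the correctness of identity (\ref{eq:combination_of_hessian_equals_zero_2}), which rests on the vanishing of the first magnetic derivatives at $\zvec$ from part (\ref{enu:magnetic_morse_index_part_1}) of Theorem \ref{thm:magnetic_morse_index}, and the strict definiteness claims.
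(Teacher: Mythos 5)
Your proposal is correct and follows essentially the same route as the paper: extract positive/negative definiteness of the Hessians $\hes_{\lambda_{n}}\left(\zvec\right)$ from the Morse indices via theorem \ref{thm:magnetic_morse_index}, then combine the two trace identities (\ref{eq:combination_of_hessian_equals_zero_1}) and (\ref{eq:combination_of_hessian_equals_zero_2}) with a shift that makes every term strictly one-signed. The only cosmetic differences are that the paper works directly with the matrix inequalities using $c=\lambda_{m+1}\left(\zvec\right)$ rather than scalarizing with a test vector and an intermediate $c$, and it dismisses the case $m=\left|\V\right|$ by citing theorem \ref{thm:main_theorem_discrete} instead of rerunning the sign argument.
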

\begin{proof}
Assume by contradiction that the graph has the above nodal count.
Namely, the graph's surplus sequence is of the form $\left\{ 0,\ldots,0,\beta,\ldots,\beta\right\} $.
From theorem \ref{thm:magnetic_morse_index} we get that 
\begin{eqnarray}
\forall\,\,1\leq n\leq m &  & \hes_{\lambda_{n}}\left(\zvec\right)>\zmat\label{eq:positivity_of_hessians}\\
\forall\,\, m+1\leq n\leq\left|\V\right| &  & \hes_{\lambda_{n}}\left(\zvec\right)<\zmat\label{eq:negativity_of_hessians}
\end{eqnarray}
where the first (second) inequality above means that the Hessians
are positive (negative) definite quadratic forms. Note that $m<\left|\V\right|$
due to theorem \ref{thm:main_theorem_discrete}. We rewrite (\ref{eq:combination_of_hessian_equals_zero_2})~as
\begin{eqnarray*}
\zmat & = & \sum_{n=1}^{m}\lambda_{n}\left(\zvec\right)\hes_{\lambda_{n}}\left(\zvec\right)+\sum_{n=m+1}^{\left|\V\right|}\lambda_{n}\left(\zvec\right)\hes_{\lambda_{n}}\left(\zvec\right)\\
 & < & \lambda_{m+1}\left(\zvec\right)\sum_{n=1}^{m}\hes_{\lambda_{n}}\left(\zvec\right)+\lambda_{m+1}\left(\zvec\right)\sum_{n=m+1}^{\left|\V\right|}\hes_{\lambda_{n}}\left(\zvec\right)\\
 & = & \lambda_{m+1}\left(\zvec\right)\sum_{n=1}^{\left|\V\right|}\hes_{\lambda_{n}}\left(\zvec\right),
\end{eqnarray*}
where we used (\ref{eq:positivity_of_hessians}), (\ref{eq:negativity_of_hessians})
and the fact that eigenvalues are ordered increasingly to get the
second line. The last line is proportional to (\ref{eq:combination_of_hessian_equals_zero_1}),
which gives the contradiction.
\end{proof}
Carrying on with this route and examining traces of higher powers
of the operator, shows that in general $\textrm{trace}\left\{ \lap^{k}\left(\magv\right)\right\} $
might depend on the magnetic parameters. More specifically, a direct
calculation of $\textrm{trace}\left\{ \lap^{k}\left(\magv\right)\right\} $
shows that it can be expressed as an expansion over closed walks of
size $k$ on the graph. Examining the dependence of such walks on
magnetic parameters brings about the following.
\begin{thm}
\label{thm:shortest_cycle}Let $\disgraph$ be a non-tree graph ($\beta>0$)
with discrete Schr�dinger operator such that all of its eigenvalues
are generic. Then the size of the graph's shortest cycle (its girth)
is 
\[
\min_{k}\left\{ \sum_{n}\lambda_{n}^{k-1}\hes_{\lambda_{n}}\neq\zmat\right\} .
\]
Alternatively, it also equals \textup{$\min_{k}\left\{ \sum_{n}\lambda_{n}^{k-1}\textrm{trace}\hes_{\lambda_{n}}\neq0\right\} $.}\end{thm}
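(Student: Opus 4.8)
The plan is to read off the combination $\sum_n\lambda_n^{k-1}\hes_{\lambda_n}(\zvec)$ from the way $\textrm{trace}\left\{\lap^k(\magv)\right\}$ depends on the magnetic fluxes, and then to control that dependence through the walk expansion of the trace. Write $T_k(\magv):=\textrm{trace}\left\{\lap^k(\magv)\right\}=\sum_n\lambda_n^k(\magv)$. First I would differentiate $T_k$ twice at $\magv=\zvec$; by part (\ref{enu:magnetic_morse_index_part_1}) of Theorem \ref{thm:magnetic_morse_index} every first derivative $\frac{\partial\lambda_n}{\partial\alpha_i}(\zvec)$ vanishes, so the gradient-squared terms drop out of the chain rule and
\[
\hes_{T_k}(\zvec)=k\sum_n\lambda_n^{k-1}(\zvec)\,\hes_{\lambda_n}(\zvec).
\]
Thus $\sum_n\lambda_n^{k-1}\hes_{\lambda_n}=\zmat$ holds exactly when $\hes_{T_k}(\zvec)=\zmat$, and the theorem reduces to finding the least $k$ for which this Hessian is nonzero.

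Next I would expand $T_k(\magv)$ as a sum over closed walks of length $k$ on $\disgraph$, each walk contributing the product of the traversed entries $\lap_{u,v}(\magv)=\lap_{u,v}\ue^{\ui\magp(u,v)}$. Collecting phases, a closed walk carries a factor $\ue^{\ui\Theta}$ with $\Theta=\sum_e c_e\magp(e)$, where $c_e\in\Z$ is the net signed number of times the walk traverses edge $e$; hence $\Theta$ depends on $\magv$ only through the homology class of the walk, i.e. through the enclosed fluxes $\alpha_1,\dots,\alpha_\beta$. A walk with vanishing net chain $\{e:c_e\neq0\}=\varnothing$ is homologically trivial, contributes a real, $\magv$-independent term, and drops out of the Hessian. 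The combinatorial heart of the argument is that any homologically nontrivial closed walk must run through a simple cycle: its support contains a simple cycle, which has at least $g$ edges, so the walk has length $k\ge\sum_e|c_e|\ge g$. Therefore, for every $k<g$ all closed walks of length $k$ are homologically trivial, $T_k(\magv)$ is constant, and $\sum_n\lambda_n^{k-1}\hes_{\lambda_n}=\zmat$.

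At $k=g$ the same counting pins down the magnetic-dependent walks exactly: each is a single traversal of a length-$g$ simple cycle, taken once in one of its two orientations with no backtracking. Summing the two orientations (and the $g$ cyclic starting points) of a shortest cycle $c$ with flux $\alpha_c=\sum_l m_c^{(l)}\alpha_l$ produces a contribution proportional to $|P_c|\cos(\alpha_c)$, where $P_c=\prod\lap_{u,v}$ is the product of off-diagonal entries along $c$. Since every off-diagonal entry is negative, all $P_c$ carry the common sign $(-1)^g$, so distinct shortest cycles cannot cancel. As $\hes_{\cos(\alpha_c)}(\zvec)=-\vec m_c\vec m_c^{\mathsf T}$ with winding vector $\vec m_c=(m_c^{(1)},\dots,m_c^{(\beta)})$, I obtain
\[
\hes_{T_g}(\zvec)\ \propto\ (-1)^{g+1}\sum_{c\ \textrm{shortest}}|P_c|\,\vec m_c\vec m_c^{\mathsf T}.
\]
A simple cycle is a nonzero element of the graph's first homology, so its coordinates $\vec m_c$ in the chosen flux basis are nonzero; the right-hand side is then a nonzero positive-semidefinite sum of rank-one matrices, giving $\sum_n\lambda_n^{g-1}\hes_{\lambda_n}\neq\zmat$. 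Hence the minimal $k$ is $g$. The trace version follows at once, since the matrix is $\zmat$ for $k<g$ and its trace at $k=g$ equals $(-1)^{g+1}$ times the positive quantity $\sum_c|P_c|\,|\vec m_c|^2$, which is nonzero.

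I expect the one genuinely delicate point to be this non-cancellation at $k=g$: a priori the many shortest cycles, and the two orientations of each, could interfere destructively and annihilate the Hessian. It is resolved by the two structural inputs highlighted above — the sign-coherence of the products $P_c$, forced by the negativity of the off-diagonal Laplacian entries, and the homological nontriviality of every simple cycle, which guarantees $\vec m_c\neq\vec 0$ — so the rank-one terms reinforce rather than cancel.
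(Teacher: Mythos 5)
Your proposal is correct and takes essentially the same route as the paper: the Hessian of $\textrm{trace}\left\{ \lap^{k}\left(\magv\right)\right\}$ is expressed through the $\hes_{\lambda_{n}}$, the trace is expanded over closed walks, walks of length below the girth are magnetically (homologically) trivial, and at $k=g$ the non-cancellation follows from the sign coherence forced by the negative off-diagonal entries together with $\vec{m}_{c}\neq\vec{0}$ for a simple cycle --- which is precisely the paper's observation that all nonvanishing second derivatives share the sign $\left(-1\right)^{g+1}$. The only (immaterial) discrepancy is that your Hessian identity correctly carries the factor $k$, which is omitted in (\ref{eq:hessian_of_trace_power}).
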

\begin{proof}
The Hessian of $\textrm{trace}\left\{ \lap^{k}\right\} $ is obtained
in terms of Hessians of the eigenvalues as 
\begin{equation}
\hes{}_{\textrm{trace}\left\{ \lap^{k}\right\} }\left(\zvec\right)=\sum_{n=1}^{\left|\V\right|}\lambda_{n}^{k-1}\hes_{\lambda_{n}}\left(\zvec\right),\label{eq:hessian_of_trace_power}
\end{equation}
where we used theorem \ref{thm:magnetic_morse_index} to conclude
that $\lambda_{n}$ has a critical point at $\magv=\zvec$, and therefore
no first derivatives of $\lambda_{n}$ appear above. The diagonal
entries of $\lap^{k}$ can be expressed using closed walks on the
graph. We define a \emph{closed walk} by $\gamma=\left(v_{0},v_{1},\ldots,v_{k-1}\right)$,
where for all $0\leq i\leq k-1$ either $v_{i}=v_{i+1}$ or $\left(v_{i},v_{i+1}\right)\in\E$
(we denote $v_{k}:=v_{0}$ and keep in mind that $\gamma$ is defined
up to cyclic permutations). This is similar to the usual notion of
closed walks, but we allow the walk to stop at any vertex for a (discrete)
while before continuing to the next one. The set of all closed walks
of size $k$, passing through vertex $v$, is denoted $\mathcal{C}_{v}^{\left(k\right)}$
and we attribute to each walk a weight obtained as a product of all
corresponding Laplacian entries, 
\[
L_{\gamma}:=\lap_{v,v_{k-1}}\ldots\lap_{v_{2},v_{1}}\lap_{v_{1},v},
\]
where the notation $v:=v_{0}$ is implied. The expression above depends
on the magnetic parameters, which we omit for the sake of brevity.
The diagonal entries of $\lap^{k}$ can be now written as 
\[
\left[\lap^{k}\right]_{v,v}=\sum_{\gamma\in\mathcal{C}_{v}^{\left(k\right)}}L_{\gamma}.
\]
Denote by $\kappa$ the graph's girth. For $k<\kappa$, the closed
walks $\gamma\in\mathcal{C}_{v}^{\left(k\right)}$ do not circulate
any cycle and therefore their weights, $L_{\gamma}$, are independent
of magnetic parameters (as can also be seen by a direct calculation
of $L_{\gamma}$). We thus get from (\ref{eq:hessian_of_trace_power})
that $\min_{k}\left\{ \sum_{i}\lambda_{i}^{k-1}\hes_{\lambda_{i}}\neq\zmat\right\} \geq\kappa$
and it is left to show that this is actually an equality. For $k=\kappa$,
let $\gamma=\left(v,v_{1},\ldots,v_{k-1}\right)$ be a closed walk
on the graph which contains one of the graph's cycles. The walk $\gamma$
must have all of its vertices different from each other, as $k$ is
the length of the shortest cycle on the graph. The contribution to
$\left[\lap^{k}\right]_{v,v}$ comes only from $\gamma$ and other
walks which circulate one of the graph cycles. We may couple all such
walks to 
\[
\gamma=\left(v,v_{1},\ldots,v_{k-1}\right)\,\textrm{and \,\ }\hat{\gamma}=\left(v,v_{k-1},\ldots,v_{1}\right),
\]
and get that their contributions are complex conjugates of each other,
$L_{\gamma}=\overline{L_{\hat{\gamma}}}$. From $\lap\left(\magv\right)$
being Hermitian and $\lap\left(\zvec\right)$ having negative off-diagonal
entries, we get 
\[
L_{\gamma}+L_{\hat{\gamma}}=\left(-1\right)^{k}\left|L_{\gamma}\right|2\cos\left(\vec{n}\cdot\magv\right),
\]
where $\vec{n}\in\Z^{\beta}$. As $\gamma$ circulates one of the
graph cycles, $\vec{n}\neq\zvec$ and therefore
\[
\exists i\,\,\textrm{s.t. \,}\left.\frac{\partial^{2}}{\partial\alpha_{i}^{2}}\left(L_{\gamma}+L_{\hat{\gamma}}\right)\right|_{\magv=\zvec}\neq0.
\]
In particular, all such second derivatives which do not vanish have
a definite sign, which equals to $\left(-1\right)^{k+1}$. Therefore,
summing over all such couples, $\gamma,\hat{\gamma}$ gives 
\[
\textrm{sign}\left\{ \left.\frac{\partial^{2}}{\partial\alpha_{i}^{2}}\left(\left[\lap^{k}\right]_{v,v}\right)\right|_{\magv=\zvec}\right\} =\left(-1\right)^{k+1},
\]
and therefore also 
\[
\textrm{sign}\left\{ \left.\frac{\partial^{2}}{\partial\alpha_{i}^{2}}\left(\textrm{trace}\left\{ \lap^{k}\right\} \right)\right|_{\magv=\zvec}\right\} =\left(-1\right)^{k+1}.
\]
This shows that for $k=\kappa$, the trace of the Hessian in (\ref{eq:hessian_of_trace_power})
is different than zero and completes the proof.\end{proof}
\begin{rem}
Note that the proof would work similarly if the second derivatives
are calculated with respect to the magnetic potential on the single
edges (rather then the flux over a cycle). In such a case, the theorem
can be extended to include tree graphs as well. In addition, these
derivatives are more accessible for computation, especially we are
only given the Laplacian and the graph's connectivity is unknown,
which is relevant as we are dealing with inverse problems.
\end{rem}
Theorem \ref{thm:shortest_cycle} goes a step forward from theorem
\ref{thm:main_theorem_discrete} as it allows to obtain some information
on the graph's cycles. Yet, the information used in this inverse result
is purely spectral - eigenvalues and their perturbations with respect
to magnetic potentials. It would be interesting to see if one may
obtain results of similar character from the nodal count of the graph.
A possible direction might be to examine the magnetic derivatives
of traces of powers of the Laplacian (as in the proof of theorem \ref{thm:shortest_cycle})
or the magnetic derivatives of the coefficients of the characteristic
polynomial. Theorem \ref{thm:magnetic_morse_index} would probably
be a main tool in such an exploration. Indeed, in the course of the
proof of theorem \ref{thm:shortest_cycle} we did not exploit the
full strength of theorem \ref{thm:magnetic_morse_index} and used
it only to claim that all first magnetic derivatives vanish. Applying
the spectral-nodal connection which theorem \ref{thm:magnetic_morse_index}
offers to gather information on the graph cycles, might be an important
step in developing a trace formula for the nodal count. A further
discussion on this direction is found in section \ref{sec:discussion}.

\section{Proofs for Metric Graphs\label{sec:inverse_theorem_metric}}

The main result of this paper for metric graphs (theorem \ref{thm:main_theorem_metric})
follows from the next lemma and theorem.
\begin{lem}
\label{lem:infinitely_many_generic_eigenvalues}Let $\metgraph$ be
a Neumann metric graph with at least one generic eigenvalue greater
than zero. Then $\metgraph$ has infinitely many generic eigenvalues.\end{lem}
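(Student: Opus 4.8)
The plan is to carry out the whole argument on the characteristic torus of Section~\ref{sub:ergodic_motion_on_the_graph_torus}, exploiting the ergodicity of the flow together with the fact that the two requirements in Assumption~\ref{ass:generality_assumption} are \emph{open} conditions on the torus. Since $\metgraph$ is a Neumann graph, I would fix $\magv=\zvec$ throughout and work with $F(\cdot\,;\zvec)$ on $\torus$, recalling that the $k$-spectrum is exactly the set of positive times at which the ergodic flow $\tvec(k)=k\vec{\xi}$ pierces the surface $\Sigma_{\zvec}=\{\,\tvec\in\torus:F(\tvec;\zvec)=0\,\}$. The hypothesis supplies one positive generic eigenvalue $\lambda_{n_{0}}=k_{0}^{2}>0$, i.e.\ a piercing point $\tvec_{0}:=k_{0}\vec{\xi}\in\Sigma_{\zvec}$ at which both parts of Assumption~\ref{ass:generality_assumption} hold.

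The first real step is to translate genericity into an open condition on $\Sigma_{\zvec}$. By Remark~\ref{rem:eigenfunctions_on_the_torus} the restriction of the eigenfunction to the (finitely many) vertices is a continuous function of the torus point alone; since each of these finitely many vertex values is nonzero at $\tvec_{0}$, it remains nonzero on a full neighborhood of $\tvec_{0}$ in $\torus$. Simplicity of the eigenvalue I would match with transversality of the flow to $\Sigma_{\zvec}$: a simple eigenvalue corresponds to a simple zero of the secular function along the flow, i.e.\ $\vec{\xi}\cdot\nabla F(\tvec_{0})\neq0$, which in turn forces $\nabla F(\tvec_{0})\neq0$, so that $\Sigma_{\zvec}$ is a smooth hypersurface near $\tvec_{0}$ meeting the flow transversally. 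Both properties are stable under perturbation of the piercing point, so the set $G\subseteq\Sigma_{\zvec}$ of generic piercing points is relatively open, and $\tvec_{0}\in G$.

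With this in hand the argument is a standard flow-box-plus-ergodicity construction. I would choose a small transversal disk $D\subseteq G$ through $\tvec_{0}$ and a thin flow box $\Phi:D\times(-\eps,\eps)\to\torus$, $\Phi(\tvec',s)=\tvec'+s\vec{\xi}$, taken small enough that $\Phi$ is a diffeomorphism onto its image and that $\Sigma_{\zvec}$ meets the box only in $D=\Phi(D\times\{0\})$. The key feature is that in these coordinates the flow is simply $s\mapsto s+\tau$ with $\tvec'$ held fixed, so any orbit point lying in the lower half $D\times(-\eps,0)$ must cross $D$ within time $\eps$, and every such crossing lands in $G$ and hence yields a generic eigenvalue. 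Ergodicity of the flow (equivalently, density of the positive orbit, since the entries of $\vec{\xi}$ are independent over $\Q$) guarantees that the orbit visits the nonempty open set $\Phi(D'\times(-\eps,-\eps/2))$ at arbitrarily large times, for a slightly smaller disk $D'\subseteq D$. Each such visit forces a transversal crossing of $D$, the crossing times $k$ are distinct and unbounded, and distinct positive $k$ give distinct eigenvalues $k^{2}$; hence $\metgraph$ has infinitely many generic eigenvalues.

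The hard part will be the second step --- pinning down precisely that each of the two clauses of Assumption~\ref{ass:generality_assumption} defines an open condition on $\Sigma_{\zvec}$, and in particular justifying the equivalence between simplicity of the eigenvalue and a simple (transversal) zero of the secular function, so that nearby piercing points remain simple. The nonvanishing-at-vertices clause is immediate from the continuity statement in Remark~\ref{rem:eigenfunctions_on_the_torus}; the simplicity clause requires the standard identification of the order of a zero of $\tilde{F}$ with the multiplicity of the corresponding eigenvalue. Once genericity is known to be an open condition, the ergodic/flow-box mechanism is routine and produces the infinitude directly.
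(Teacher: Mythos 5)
Your proposal is correct and follows essentially the same route as the paper: identify $k_{0}\vec{\xi}$ as a point of $\Sigma_{\zvec}$ where transversality ($\vec{\xi}\cdot\vec{\nabla}F\neq0$, coming from the fact that eigenvalue multiplicity equals the order of the zero of the secular function) and non-vanishing of the eigenfunction at vertices (via Remark \ref{rem:eigenfunctions_on_the_torus}) both hold, observe that these are open conditions defining a neighbourhood $\Xi\subset\Sigma_{\zvec}$ of generic piercing points, and invoke ergodicity of the linear flow to get infinitely many piercings of $\Xi$. The flow-box packaging and your explicit flagging of the multiplicity-equals-order-of-zero identification are just more detailed renderings of steps the paper states directly.
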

\begin{thm}
\label{thm:infinitely_many_with_same_or_symmetric_surplus}Let $\metgraph$
be a Neumann metric graph with $\beta>0$ cycles and at least one
generic eigenvalue greater than zero. Denote by $\sigma$ the nodal
surplus of such an eigenvalue. Then
\begin{enumerate}
\item there are infinitely many generic eigenvalues whose nodal surplus
equals $\sigma$.
\item there are infinitely many generic eigenvalues whose nodal surplus
equals $\beta-\sigma$.
\end{enumerate}
\end{thm}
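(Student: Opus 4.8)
The plan is to argue entirely within the ergodic-flow picture of subsection \ref{sub:ergodic_motion_on_the_graph_torus}. The positive $k$-eigenvalues of $\metgraph$ are exactly the transversal crossings of the ergodic flow $\tvec(k)=k\vec{\xi}$ of \eqref{eq:flow_definition} through the zero set $\Sigma_{\zvec}=\{F(\,\cdot\,;\zvec)=0\}$ of \eqref{eq:level_set_of_secular_function}. The guiding observation is that the nodal surplus is a \emph{local} invariant of the crossing point: by Theorem \ref{thm:magnetic_morse_index} one has $\sur=\morse_{\lambda_n}(\zvec)$, and I would express this Morse index purely through data of $F$ at the crossing point $\tvec_0:=k_n\vec{\xi}\bmod\torus$. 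Once the surplus is shown to depend only on $\tvec_0$, and to be locally constant on the generic regular part of $\Sigma_{\zvec}$, the ergodicity of the flow — equivalently, equidistribution of crossing points on $\Sigma_{\zvec}$ with respect to the flux measure $|\vec{\xi}\cdot\vec{n}|\,\mathrm{d}S$ — returns the flow to any positive-measure patch infinitely often, yielding infinitely many generic eigenvalues of the prescribed surplus.

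For the local formula I would differentiate the defining identity $F(k(\magv)\vec{\xi};\magv)=0$ twice in $\magv$ at $\magv=\zvec$. By part \ref{enu:magnetic_morse_index_part_1} of Theorem \ref{thm:magnetic_morse_index} all first derivatives $\partial_{\alpha_i}k(\zvec)$ vanish, so the chain rule collapses to $c\,\partial_{\alpha_i}\partial_{\alpha_j}k(\zvec)+\partial_{\alpha_i}\partial_{\alpha_j}F=0$, where $c:=\vec{\xi}\cdot\nabla_{\tvec}F(\tvec_0;\zvec)$ is the crossing speed (nonzero, since a simple eigenvalue is a simple zero of the secular function) and the last term is the Hessian $\hes_F$ of $F$ in the magnetic variables. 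Thus the Hessian of $k_n$ — and, since $k_n(\zvec)>0$, also that of $\lambda_n=k_n^{2}$ — equals $-c^{-1}\hes_F$. By part \ref{enu:magnetic_morse_index_part_2} this Hessian is nondegenerate, so if $\hes_F$ has $p$ positive and $\beta-p$ negative eigenvalues, then
\[
\sur=\morse_{\lambda_n}(\zvec)=\begin{cases}p & c>0\\ \beta-p & c<0.\end{cases}
\]
Both $\textrm{sign}(c)$ and the signature $p$ are locally constant on the generic regular locus of $\Sigma_{\zvec}$, so the surplus is too; applying equidistribution on a neighborhood of $\tvec_0$ gives part (1).

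For part (2) I would exploit a time-reversal symmetry of the secular function. For a Neumann graph $\scatmat$ is a constant \emph{real} (orthogonal) matrix, while $\magmat(\magv)+\lenmat(\lenmap(\tvec))$ is a real diagonal matrix, odd under $(\tvec,\magv)\mapsto(-\tvec,-\magv)$; conjugating \eqref{eq:secular_function} and using that $F$ is real-valued then yields $F(-\tvec;-\magv)=F(\tvec;\magv)$. Consequently $-\tvec_0\in\Sigma_{\zvec}$; its magnetic Hessian is unchanged, $\hes_F(-\tvec_0)=\hes_F(\tvec_0)$ (even under the symmetry), so its signature is again $p$, whereas its crossing speed is reversed, $\vec{\xi}\cdot\nabla_{\tvec}F(-\tvec_0)=-c$ (odd under the symmetry). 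By the case formula above the surplus carried by $-\tvec_0$ is therefore $\beta-\sigma$. Equidistribution of the forward flow near $-\tvec_0$ then produces infinitely many generic eigenvalues of surplus $\beta-\sigma$, giving part (2). (Note $-\tvec_0\neq\tvec_0$, since $2\tvec_0=\zvec$ would force $c=-c$, contradicting transversality.)

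The step I expect to be most delicate is the genericity bookkeeping needed to make ``locally constant, hence infinitely often'' rigorous. One must verify that a neighborhood $U\subset\Sigma_{\zvec}$ of $\tvec_0$ (and of $-\tvec_0$) can be chosen inside the locus whose crossings correspond to \emph{generic} eigenvalues — simple, with eigenfunction nonvanishing on the vertices — so that Theorem \ref{thm:magnetic_morse_index} applies to the neighbouring crossings, the surface is regular, and $c\neq0$. The non-generic points form a flux-measure-zero subset of $\Sigma_{\zvec}$: multiplicity drops are singular/self-intersection loci, and vertex-vanishing of the eigenfunction is a codimension-one condition on the continuous vertex-value map of Remark \ref{rem:eigenfunctions_on_the_torus}. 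Since $\tvec_0$ is generic and the involution $\tvec\mapsto-\tvec$ preserves both simplicity and vertex-nonvanishing (it implements complex conjugation of the real $\magv=\zvec$ eigenfunctions), such $U$ exist with positive flux measure, and the count $\#\{k\le K:\ \tvec(k)\in U\}\sim K\!\int_U|\vec{\xi}\cdot\vec{n}|\,\mathrm{d}S$ forces infinitely many crossings in each. Converting this equidistribution-of-crossings heuristic into the precise infinitude statement, and confirming that the normalization factors in \eqref{eq:secular_function} respect the symmetry $F(-\tvec;-\magv)=F(\tvec;\magv)$, is the real work of the proof.
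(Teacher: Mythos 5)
Your proposal is correct and follows essentially the same route as the paper: your inline second differentiation of $F(k(\magv)\vec{\xi};\magv)=0$ is exactly Lemma \ref{lem:hessian_of_eigenvalue_equals_hessian_on_torus}, your conjugation symmetry $F(-\tvec;-\magv)=F(\tvec;\magv)$ is Lemma \ref{lem:symmetry_of_secular_function}, and the local-constancy-plus-ergodicity step (with genericity preserved on a small enough neighbourhood of $\Sigma_{\zvec}$ by continuity of the vertex values and of $\vec{\xi}\cdot\vec{\nabla}F$) matches the paper's argument. The only cosmetic difference is that you split the Morse index into $\mathrm{sign}(c)$ and the signature of $\hes_F$, where the paper works directly with the matrix $\hes_F/(\vec{\xi}\cdot\vec{\nabla}F)$.
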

We first apply lemma \ref{lem:infinitely_many_generic_eigenvalues}
and theorem \ref{thm:infinitely_many_with_same_or_symmetric_surplus}
to prove theorem \ref{thm:main_theorem_metric}, and then proceed
to prove them as well.
\begin{proof}
[Proof of theorem \ref{thm:main_theorem_metric}]

Observe that $\metgraph$ has infinitely many generic eigenvalues
as a direct conclusion of lemma \ref{lem:infinitely_many_generic_eigenvalues}.
If $\metgraph$ is a tree graph then it was proved in \cite{PokPryObe_mz96,Schapotschnikow06}
(see also appendix A in \cite{Berkolaiko07}) that the nodal counts
of all generic eigenfunctions are $\fc=n-1$ and $\nc=n$. Otherwise,
if $\metgraph$ has $\beta>0$ cycles, assume by contradiction that
there are only finitely many generic eigenfunctions with $\fc\neq n-1$.
In particular this means that there is at least one generic eigenfunction
for which $\fc=n-1$ and thus $\sur=0$. We may therefore conclude
from theorem \ref{thm:infinitely_many_with_same_or_symmetric_surplus}
that there are infinitely many generic eigenfunctions whose surplus
is $\beta$ ($\fc=n-1+\beta$) and get a contradiction. 

In order to prove the statement about the nodal domain count we use
the following connection between nodal point count and nodal domain
count for metric graphs, which is proved for example in \cite{BanBerRazSmi_cmp12}
(see there, equation (1.11) together with lemma 3.2), 
\begin{equation}
\exists n_{0}\,\textrm{ s.t. \,}\forall n>n_{0}\,\quad\fc=\nc-1+\beta.\label{eq:nodal_domains_expressed_as_nodal_counts_high_in_the_spectrum}
\end{equation}
Now, assume by contradiction that there are only finitely many eigenfunctions
with $\nc<n$. Then there exists some $n_{1}>n_{0}$ such that for
all $n\in\nums\cap\left(n_{1},\infty\right),\,\nc=n$ and from (\ref{eq:nodal_domains_expressed_as_nodal_counts_high_in_the_spectrum})
we get that for all $n\in\nums\cap\left(n_{1},\infty\right),\,\fc=n-1+\beta.$
The surplus of all these eigenfunction is $\beta$. Applying theorem
\ref{thm:infinitely_many_with_same_or_symmetric_surplus} gives that
there are also infinitely many generic eigenfunctions with surplus
equals to zero, but the conclusion in the previous sentence allows
for only finitely many surpluses to differ than $\beta$ and hence
the contradiction.
\end{proof}
We now proceed to the proof of lemma \ref{lem:infinitely_many_generic_eigenvalues},
and bring two additional lemmata, all of which will be used to prove
theorem \ref{thm:infinitely_many_with_same_or_symmetric_surplus}.
In the following we refer to $k_{n}\left(\magv\right)=\sqrt{\lambda_{n}\left(\magv\right)}$
as the graph eigenvalues, which is valid and done for the sake of
simplicity.
\begin{proof}
[Proof of lemma \ref{lem:infinitely_many_generic_eigenvalues}]Let
$\vec{\xi}\in\R^{I}$ be a vector of incommensurate entries which
express the graph edge lengths by (\ref{eq:edge_lengths_as_linear_combinations}).
Let $k_{0}$ be a generic eigenvalue of $\metgraph$. Note that the
graph eigenvalues are given as zeros of $F$ together with their multiplicities
(a property that $F$ inherits from $\tilde{F}$, see (\ref{eq:secular_function})
and (\ref{eq:secular_equation_defined_on_torus})). We therefore get
that $\frac{\textrm{d}}{\textrm{d}k}F\neq0$ at simple eigenvalues.
As we assumed $k_{0}$ to be a simple eigenvalue we have $\left.\frac{\textrm{d}}{\textrm{d}k}F\right|_{\left(k_{0}\vec{\xi},\zvec\right)}=\left.\vec{\xi}\cdot\vec{\nabla}F\right|_{\left(k_{0}\vec{\xi},\zvec\right)}\neq0$.
We know that $k_{0}\vec{\xi}\in\Sigma_{\zvec}$, as $k_{0}$ is an
eigenvalue and $\Sigma_{\zvec}$ is the zero set of $F$ (recall section
\ref{sub:ergodic_motion_on_the_graph_torus}). Choose a neighbourhood
$\Xi$ of $k_{0}\vec{\xi}$ on $\Sigma_{\zvec}$ such that $\left.\vec{\xi}\cdot\vec{\nabla}F\right|_{\left(\tvec,\zvec\right)}\neq0$
for all $\tvec\in\Xi$. Recall (remark \ref{rem:eigenfunctions_on_the_torus})
that the values the eigenfunction obtains on the graph vertices are
a continuous function of $\tvec$. From the genericity of $k$ we
know that its corresponding eigenfunction does not vanish on the vertices
and we may therefore choose $\Xi$ small enough such that this property
holds for all $\tvec\in\Xi$. As the flow is ergodic, the set $\Xi\subset\Sigma_{\zvec}$
will be pierced an infinite number of times by the flow, yielding
infinitely many eigenvalues. Our choice of $\Xi$ guarantees both
that these eigenvalues are simple and that their corresponding eigenfunctions
do not vanish on vertices.\end{proof}
\begin{lem}
\label{lem:hessian_of_eigenvalue_equals_hessian_on_torus}Let $\vec{\xi}\in\R^{I}$
be a vector of incommensurate entries which express the graph edge
lengths by (\ref{eq:edge_lengths_as_linear_combinations}). Let $k\left(\zvec\right)\neq0$
be a generic eigenvalue. The Hessian of this eigenvalue with respect
to the magnetic fluxes at $\magv=\zvec$ is given as
\begin{equation}
\hes_{k}\left(\zvec\right)=-\left.\frac{\hes_{F}}{\left(\vec{\xi}\cdot\vec{\nabla}F\right)}\right|_{\left(k\left(\zvec\right)\vec{\xi};\zvec\right)},\label{eq:hessian_of_eigenvalue_equals_hessian_on_torus}
\end{equation}
where $\hes_{F}$ denotes the Hessian of the secular function $F\left(k\left(\zvec\right)\vec{\xi};\cdot\right)$
with respect to its magnetic parameters and $\vec{\nabla}F$ is the
gradient of $F\left(\cdot;\zvec\right)$ taken with respect to its
coordinates on the torus $\torus$.\end{lem}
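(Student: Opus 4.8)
The plan is to derive (\ref{eq:hessian_of_eigenvalue_equals_hessian_on_torus}) by implicit differentiation of the equation that defines the eigenvalue branch. Since the $k$-eigenvalues are precisely the zeros of $F$ (with multiplicity), the eigenvalue $k(\magv)$ near $\magv=\zvec$ is characterized by the identity $F(k(\magv)\vec{\xi};\,\magv)\equiv 0$. Because $k(\zvec)$ is simple, the directional derivative of $F$ along the flow direction is nonzero, $\left.\vec{\xi}\cdot\vec{\nabla}F\right|_{(k(\zvec)\vec{\xi};\zvec)}\neq 0$, which is the same observation used in the proof of lemma \ref{lem:infinitely_many_generic_eigenvalues}, namely $\tfrac{\ud}{\ud k}F=\vec{\xi}\cdot\vec{\nabla}F\neq 0$ at a simple eigenvalue. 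The implicit function theorem then produces a smooth branch $\magv\mapsto k(\magv)$ on a neighbourhood of $\zvec$ satisfying this identity, and it is exactly this branch whose Hessian we must compute.

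First I would differentiate the identity once with respect to $\alpha_j$. The only subtlety is to keep the two variable sets apart: $\vec{\nabla}F$ acts on the torus variable $\tvec=k(\magv)\vec{\xi}$, whose $\alpha_j$-dependence is only through $k(\magv)$, while the explicit magnetic dependence of $F$ is differentiated directly. The chain rule gives
\[
\left(\vec{\xi}\cdot\vec{\nabla}F\right)\frac{\partial k}{\partial\alpha_j}+\frac{\partial F}{\partial\alpha_j}=0 .
\]
Evaluating at $\magv=\zvec$ and invoking part (\ref{enu:magnetic_morse_index_part_1}) of theorem \ref{thm:magnetic_morse_index}, which makes $\zvec$ a critical point of $\lambda_n=k_n^2$ and hence of $k_n$ (here $k(\zvec)\neq 0$), forces $\partial k/\partial\alpha_j|_{\zvec}=0$; as a free by-product one also reads off $\partial F/\partial\alpha_j|_{(k(\zvec)\vec{\xi};\zvec)}=0$.

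Next I would differentiate this first-order relation once more, with respect to $\alpha_i$, and evaluate at $\magv=\zvec$. In the product $(\vec{\xi}\cdot\vec{\nabla}F)\,\partial k/\partial\alpha_j$ the piece in which $\vec{\xi}\cdot\vec{\nabla}F$ is differentiated carries the factor $\partial k/\partial\alpha_j|_{\zvec}=0$ and drops out, so only $(\vec{\xi}\cdot\vec{\nabla}F)\,\partial^{2}k/\partial\alpha_i\partial\alpha_j$ survives from that term. In $\partial F/\partial\alpha_j$ the chain-rule contribution coming from the torus variable $\tvec=k(\magv)\vec{\xi}$ carries the factor $\partial k/\partial\alpha_i|_{\zvec}=0$ and vanishes, leaving the pure magnetic second derivative $\partial^{2}F/\partial\alpha_i\partial\alpha_j=[\hes_F]_{i,j}$. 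What remains is the clean identity
\[
\left(\vec{\xi}\cdot\vec{\nabla}F\right)\frac{\partial^{2}k}{\partial\alpha_i\partial\alpha_j}+\frac{\partial^{2}F}{\partial\alpha_i\partial\alpha_j}=0,
\]
that is, $(\vec{\xi}\cdot\vec{\nabla}F)\,[\hes_k(\zvec)]_{i,j}+[\hes_F]_{i,j}=0$ evaluated at $(k(\zvec)\vec{\xi};\zvec)$, and solving for $\hes_k(\zvec)$ yields (\ref{eq:hessian_of_eigenvalue_equals_hessian_on_torus}).

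I expect the only real obstacle to be bookkeeping: $F$ depends on two distinct sets of variables\textemdash the torus coordinates $\tvec$, on which $\vec{\nabla}F$ acts, and the magnetic parameters $\magv$, with respect to which $\hes_F$ is taken\textemdash so the chain rule must be applied carefully along the curve $\tvec=k(\magv)\vec{\xi}$. The payoff is that the first-order criticality supplied by theorem \ref{thm:magnetic_morse_index} annihilates precisely the cross terms mixing the two variable sets, which is what makes the second-order expression collapse to the single quotient in (\ref{eq:hessian_of_eigenvalue_equals_hessian_on_torus}) rather than a sum of several contributions.
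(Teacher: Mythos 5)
Your proposal is correct and follows essentially the same route as the paper: implicit differentiation of $F\left(k\left(\magv\right)\vec{\xi};\,\magv\right)=0$ twice in the magnetic parameters, using part (\ref{enu:magnetic_morse_index_part_1}) of theorem \ref{thm:magnetic_morse_index} to annihilate the terms containing first derivatives of $k$, and the simplicity of $k\left(\zvec\right)$ (as in lemma \ref{lem:infinitely_many_generic_eigenvalues}) to justify dividing by $\vec{\xi}\cdot\vec{\nabla}F$. The only cosmetic difference is that you differentiate the first-order relation a second time rather than writing out the full second total derivative in one step, which yields the same cancellation.
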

\begin{proof}
The eigenvalue $k\left(\magv\right)$ is given implicitly as the solution
of $F\left(k\left(\magv\right)\vec{\xi};\,\magv\right)=0$. Take the
second total derivatives of this expression with respect to any two
magnetic fluxes, 
\begin{eqnarray*}
0 & = & \frac{\textrm{d}^{2}}{\textrm{d}\alpha_{i}\textrm{d}\alpha_{j}}F\left(k\left(\magv\right)\vec{\xi};\,\magv\right)\\
 & = & \frac{\partial^{2}k}{\partial\alpha_{i}\partial\alpha_{j}}\left(\vec{\xi}\cdot\vec{\nabla}F\right)+\frac{\partial k}{\partial\alpha_{i}}\frac{\partial k}{\partial\alpha_{j}}\sum_{p,q=1}^{\left|I\right|}\frac{\partial^{2}F}{\partial x_{p}\partial x_{q}}\xi_{p}\xi_{q}+\frac{\partial k}{\partial\alpha_{j}}\left(\vec{\xi}\cdot\frac{\partial}{\partial\alpha_{j}}\vec{\nabla}F\right)+\frac{\partial^{2}F}{\partial\alpha_{i}\partial\alpha_{j}}.
\end{eqnarray*}
Evaluate the above at the point $\magv=\zvec$ and use $\forall i\,\left.\frac{\partial k}{\partial\alpha_{i}}\right|_{\magv=\zvec}=0$
(theorem \ref{thm:magnetic_morse_index}, part (\ref{enu:magnetic_morse_index_part_1})),
to get that the second and the third terms above vanish and we are
left with
\begin{equation}
\left.\left\{ \left(\vec{\xi}\cdot\vec{\nabla}F\right)\hes_{k}\left(\zvec\right)+\hes_{F}\right\} \right|_{\left(k\left(\zvec\right)\vec{\xi},\zvec\right)}=0.\label{eq:hessian_of_eigenvalue_and_hessian_on_torus_relation}
\end{equation}
We now repeat the argument in the proof of lemma \ref{lem:infinitely_many_generic_eigenvalues}
to conclude that if $k\left(\zvec\right)$ is simple then $\left.\frac{\textrm{d}}{\textrm{d}k}F\right|_{\left(k\left(\zvec\right)\vec{\xi},\zvec\right)}=\left.\vec{\xi}\cdot\vec{\nabla}F\right|_{\left(k\left(\zvec\right)\vec{\xi},\zvec\right)}\neq0$.
Thus, we divide (\ref{eq:hessian_of_eigenvalue_and_hessian_on_torus_relation})
by $\left.\vec{\xi}\cdot\vec{\nabla}F\right|_{\left(k\left(\zvec\right)\vec{\xi},\zvec\right)}$
to get (\ref{eq:hessian_of_eigenvalue_equals_hessian_on_torus}).\end{proof}
\begin{lem}
\label{lem:symmetry_of_secular_function}The secular function $F$
of a Neumann graph exhibits the following symmetry
\begin{equation}
F\left(\tvec;\,\magv\right)=F\left(-\tvec;\,-\magv\right).\label{eq:symmetry_of_secular_function}
\end{equation}
\end{lem}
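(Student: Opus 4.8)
The plan is to read the claimed symmetry straight off the explicit secular determinant (\ref{eq:secular_function}) specialized to a Neumann graph, exploiting two structural features. First, the matrices $\magmat(\magv)$ and $\lenmat(\lenv)$ are real and diagonal, and the scattering matrix $\scatmat$ is not merely constant but in fact \emph{real}: for standard conditions the vertex scattering matrix at a vertex of degree $d$ is $\tfrac{2}{d}J-\id$ (with $J$ the all-ones matrix), which is real symmetric, so the bond scattering matrix assembled from these blocks is real. Second, $\magmat(\magv)$ is linear in $\magv$ and the length map $\lenmap$ is homogeneous, so $\lenmat(\lenmap(-\tvec))=-\lenmat(\lenmap(\tvec))$.

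First I would set $\mathbf{M}:=\magmat(\magv)+\lenmat(\lenmap(\tvec))$, a real diagonal matrix, and record that the substitution $(\tvec,\magv)\mapsto(-\tvec,-\magv)$ sends $\mathbf{M}\mapsto-\mathbf{M}$, hence $\ue^{\pm\ui\mathbf{M}}\mapsto\ue^{\mp\ui\mathbf{M}}$ and the prefactor $\det(\ue^{-\frac{\ui}{2}\mathbf{M}})\mapsto\det(\ue^{\frac{\ui}{2}\mathbf{M}})$. This rewrites $F(-\tvec;-\magv)$ entirely in terms of $\mathbf{M}$, and the goal becomes matching it to $F(\tvec;\magv)$.

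The second step is to compute the complex conjugate of $F(\tvec;\magv)$. Because $\mathbf{M}$ is real and $\scatmat$ is real, conjugation replaces $\ue^{\ui\mathbf{M}}$ by $\ue^{-\ui\mathbf{M}}$ and fixes $\scatmat$, so that $\overline{F(\tvec;\magv)}$ coincides with the expression obtained for $F(-\tvec;-\magv)$ in the first step, except that the constant factor $\det(\scatmat)^{-1/2}$ is replaced by its conjugate. Invoking the stated reality of $\tilde{F}$, which gives $\overline{F(\tvec;\magv)}=F(\tvec;\magv)$, then yields
\[
F(\tvec;\magv)=\frac{\overline{\det(\scatmat)^{-1/2}}}{\det(\scatmat)^{-1/2}}\,F(-\tvec;-\magv),
\]
so the identity holds up to the constant phase $\overline{\det(\scatmat)^{-1/2}}/\det(\scatmat)^{-1/2}=\pm 1$.

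Hence the crux, and the step I expect to be the real obstacle, is controlling this constant prefactor, i.e. showing $\det(\scatmat)^{-1/2}$ is real. I would do this by block-decomposing the bond scattering matrix over vertices: grouping the columns by head vertex and the rows by tail vertex exhibits $\scatmat$ as a vertex-block matrix with blocks $\sigma^{(v)}$, where $\det\sigma^{(v)}=(-1)^{d_v-1}$. The delicate point is the sign of the permutation relating the two groupings, which can be untangled using the bond-reversal involution $e\mapsto\hat e$; once that sign is pinned down the prefactor is real and the two steps above close the argument. As a conceptual cross-check I note that $\magv\mapsto-\magv$ is precisely complex conjugation (time reversal) of the magnetic operator while $\tvec\mapsto-\tvec$ reverses the spectral parameter, so the lemma can also be read as the statement that these two reversals act identically on the real-valued secular function.
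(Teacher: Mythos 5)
Your route is the same as the paper's: the proof there is precisely the chain
$F\left(k\vec{\xi};\magv\right)=\tilde{F}\left(k;\lenmap\left(\vec{\xi}\right);\magv\right)=\overline{\tilde{F}\left(-k;\lenmap\left(\vec{\xi}\right);-\magv\right)}=\tilde{F}\left(-k;\lenmap\left(\vec{\xi}\right);-\magv\right)=F\left(-k\vec{\xi};-\magv\right)$,
justified by the homogeneity of $\lenmap$, the reality and linearity of $\magmat+k\lenmat$, the reality and $k$-independence of $\scatmat$, and the reality of $\tilde{F}$ --- exactly your first two steps. To your credit, you have also isolated the one point the paper's proof passes over in silence: conjugation matches every factor of (\ref{eq:secular_function}) except the constant $\det\left(\scatmat\right)^{-\nicefrac{1}{2}}$, so what the argument actually delivers is $F\left(\tvec;\magv\right)=c\,F\left(-\tvec;-\magv\right)$ with $c=\overline{\det\left(\scatmat\right)^{-\nicefrac{1}{2}}}/\det\left(\scatmat\right)^{-\nicefrac{1}{2}}=\pm1$.

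However, your plan for closing the gap --- proving $\det\left(\scatmat\right)^{-\nicefrac{1}{2}}$ real --- cannot succeed in general. Carrying out the block computation you sketch, write $\scatmat=\tilde{\Sigma}R$ with $R$ the bond-reversal permutation and $\tilde{\Sigma}$ block diagonal with vertex blocks $\frac{2}{d_{v}}J-\id$ ($J$ the all-ones matrix); then $\det\scatmat=\left(-1\right)^{\left|\EE\right|}\prod_{v}\left(-1\right)^{d_{v}-1}=\left(-1\right)^{\left|\EE\right|+\left|\V\right|}=\left(-1\right)^{\beta+1}$. So whenever $\beta$ is even the prefactor equals $\pm\ui$ and $c=-1$. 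The single Neumann edge already exhibits the sign: (\ref{eq:secular_function}) gives $\tilde{F}=\pm\ui\,\ue^{-\ui kl}\left(1-\ue^{2\ui kl}\right)=\pm2\sin\left(kl\right)$, an odd function of $k$. With the normalization as written, the correct statement is therefore $F\left(\tvec;\magv\right)=\left(-1\right)^{\beta+1}F\left(-\tvec;-\magv\right)$. This defect is inherited from the paper's own proof rather than introduced by you, and it is harmless where the lemma is used: a global constant sign cancels in the ratio $\hes_{F}/\left(\vec{\xi}\cdot\vec{\nabla}F\right)$ of (\ref{eq:inverse_hessians}) and leaves the zero set $\Sigma_{\zvec}$ unchanged. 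The right repair is thus not to prove the prefactor real, but either to state the symmetry up to this global sign or to renormalize $\tilde{F}$ by a factor of $\ui$ when $\det\scatmat=-1$. Your closing cross-check (time reversal combined with $k\mapsto-k$) is sound and is indeed the mechanism invoked in the cited references, but it only identifies spectra, i.e.\ zero sets, and so cannot by itself exclude the overall sign either.
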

\begin{proof}
Let $\tvec\in\torus$ and $\alpha\in\R^{\beta}$. Choose some $\vec{\xi}\in\R^{\left|I\right|}$
with incommensurate entries. There exists $k\in\R$ such that $\tvec\equiv k\vec{\xi}$
modulo $\torus$. The required symmetry (\ref{eq:symmetry_of_secular_function})
is now obtained from the following series of equalities.
\begin{eqnarray*}
F\left(k\vec{\xi};\,\magv\right)= & \tilde{F}\left(k;\,\lenmap\left(\vec{\xi}\right);\,\magv\right) & =\overline{\tilde{F}\left(-k;\,\lenmap\left(\vec{\xi}\right);\,-\magv\right)}\\
 &  & =\tilde{F}\left(-k;\,\lenmap\left(\vec{\xi}\right);\,-\magv\right)=F\left(-k\vec{\xi};\,-\magv\right).
\end{eqnarray*}
The first and last equalities result from property (\ref{enu:secular_function_second_property}),
mentioned after the definition of $F$ on the torus, (\ref{eq:secular_equation_defined_on_torus}).
The second equality can be deduced from (\ref{eq:secular_function})
together with the linearity of $\mathbf{\magmat}\left(\magv\right)$
and the observation that $\scatmat\left(k\right)$ is real and $k$-independent
for a Neumann graph. Finally, the third inequality (from the first
to the second line) is due to $\tilde{F}$ being real.
\end{proof}
~
\begin{proof}
[Proof of theorem \ref{thm:infinitely_many_with_same_or_symmetric_surplus}]
Denote the edge lengths of $\metgraph$ by the vector $\lenv$ and
choose an incommensurate set $\vec{\xi}\in\R^{I}$ which is related
to graph edge lengths by (\ref{eq:edge_lengths_as_linear_combinations}).
Let $k$ be a generic eigenvalue of $\metgraph$ whose nodal surplus
is $\sigma$. The Hessian of $k$ is given by (\ref{eq:hessian_of_eigenvalue_equals_hessian_on_torus})
in lemma \ref{lem:hessian_of_eigenvalue_equals_hessian_on_torus}
and we know from theorem \ref{thm:magnetic_morse_index} part (\ref{enu:magnetic_morse_index_part_2})
that it is non-degenerate. We may therefore choose a neighbourhood
$\Xi$ of $k\vec{\xi}$ on $\Sigma_{\zvec}$ such that $\left.\frac{\hes_{F}}{\left(\vec{\xi}\cdot\vec{\nabla}F\right)}\right|_{\left(\tvec,\zvec\right)}$
is non-degenerate for all $\tvec\in\Xi$, and thus the number of negative
eigenvalues of $\left.\frac{\hes_{F}}{\left(\vec{\xi}\cdot\vec{\nabla}F\right)}\right|_{\left(\tvec,\zvec\right)}$
stays constant in this neighbourhood. This number equals to the Morse
index of the eigenvalue and it equals to $\sigma$ by an application
of theorem \ref{thm:magnetic_morse_index} (part (\ref{enu:magnetic_morse_index_part_3})).
As the flow is ergodic, the set $\Xi\subset\Sigma_{\zvec}$ is pierced
an infinite number of times by the flow, yielding infinitely many
eigenvalues. All of these eigenvalues are generic if $\Xi$ is chosen
small enough, as can be shown by repeating the argument in the proof
of lemma \ref{lem:infinitely_many_generic_eigenvalues}.\textbf{ }Each
of these eigenvalues would have this same Morse index, $\sigma$,
by (\ref{eq:hessian_of_eigenvalue_equals_hessian_on_torus}). The
nodal surplus of each of these eigenvalues is therefore also equal
to $\sigma$, according to theorem \ref{thm:magnetic_morse_index}
(part (\ref{enu:magnetic_morse_index_part_3})) and this proves the
first statement in our theorem. The second statement is proved with
the aid of the symmetry (\ref{eq:symmetry_of_secular_function}) in
lemma \ref{lem:symmetry_of_secular_function}, which allows to conclude
\begin{eqnarray*}
\forall\tvec\in\Xi\,\,\,\,;\left.\hes_{F}\right|_{\left(\tvec,\zvec\right)} & = & \left.\hes_{F}\right|_{\left(-\tvec,\zvec\right)}\\
\left.\vec{\nabla}F\right|_{\left(\tvec,\zvec\right)} & = & -\left.\vec{\nabla}F\right|_{\left(-\tvec,\zvec\right)},
\end{eqnarray*}
from which we get 
\begin{eqnarray}
\forall\tvec\in\Xi\,\,\,\,;-\left.\frac{\hes_{F}}{\left(\vec{\xi}\cdot\vec{\nabla}F\right)}\right|_{\left(\tvec,\zvec\right)} & = & \left.\frac{\hes_{F}}{\left(\vec{\xi}\cdot\vec{\nabla}F\right)}\right|_{\left(-\tvec,\zvec\right)}.\label{eq:inverse_hessians}
\end{eqnarray}
The LHS has $\sigma$ negative eigenvalues and therefore the RHS has
$\beta-\sigma$ negative eigenvalues. From (\ref{eq:symmetry_of_secular_function})
we also deduce $\Xi\subset\Sigma_{\zvec}\,\Rightarrow\,-\Xi\subset\Sigma_{\zvec}$
(where $-\Xi:=\left\{ \tvec;\,\,-\tvec\in\Xi\right\} $). Now use
again the ergodicity of the flow to conclude that the set $-\Xi$
is pierced an infinite number of times by the flow and all resulting
eigenvalues have Morse index $\beta-\sigma$, which equals to their
nodal surplus following theorem \ref{thm:magnetic_morse_index}.\end{proof}
\begin{rem}
\label{rem:spectral_invariant_for_metric_case}The proof above essentially
enables the proof of the metric inverse nodal theorem (theorem \ref{thm:main_theorem_metric}).
One might note that the proof here is of a different nature than the
proof of the discrete inverse nodal theorem (theorem \ref{thm:main_theorem_discrete}),
where we have identified the trace of the operator as a spectral invariant
independent of magnetic potential. One can find, however, a similarity
between the proofs, as the antisymmetric relation (\ref{eq:inverse_hessians})
points on the possibility to average the Hessians over whole torus
and show that the magnetic dependence cancels.
\begin{rem}
In the course of the proof we have used an equality between the Morse
index of a particular eigenvalue and the Morse index of the secular
function $F$ evaluated at the appropriate point. A similar relation
appears in appendix E of \cite{cdv_arx12} where the Morse index of
an eigenvalue is expressed in terms of the Morse index of the characteristic
polynomial evaluated at this eigenvalue.
\begin{rem}
The symmetry which lemma \ref{lem:symmetry_of_secular_function} describes
can already be exhibited on the level of the Schr�dinger operator
on the graph. One can show that the $\magv\rightleftarrows-\magv$
symmetry amounts to conjugation of all the eigenvalues (see (\ref{eq:magnetic_metric_Schroedinger})),
which actually leaves the spectrum invariant as it is real. Lemma
\ref{lem:symmetry_of_secular_function} tells us that the transformation
$\magv\rightarrow-\magv$ changes the sign of the $k$-eigenvalues,
which also leaves the spectrum invariant. This spectral symmetry was
exploited in \cite{BerWey_ptrs13,cdv_arx12} to prove that the eigenvalues
have critical points at $\magv=\zvec$.
\end{rem}
\end{rem}
\end{rem}

\section{Discretized Versions of a Metric Graph\label{sec:discretized_vesrions}}

This section presents a connection between a certain metric graph
and various discrete graphs with similar nodal surplus. These discrete
graphs can be obtained from the metric one by means of a simple construction
and they will be called its discretized versions. This correspondence
between a metric graph and its discretized versions is interesting
on its own and can also be used as a tool to provide an additional
proof for theorem \ref{thm:infinitely_many_with_same_or_symmetric_surplus}
and thus also for the inverse result in theorem \ref{thm:main_theorem_metric}.
More interestingly, this might be used to further extend theorem \ref{thm:main_theorem_metric}
for graphs with general vertex conditions and electric potentials
(see remark \ref{rem:generalizations_for_non_neumann_and_non_zero_potential}).

The construction starts by picking a vector $\vec{\xi}\in\R^{I}$
of incommensurate entries such that the graph edge lengths are given
by $\lenv=\lenmap\left(\vec{\xi}\right)$ (as in section \ref{sub:ergodic_motion_on_the_graph_torus}).
The specific discretized version we construct is characterized by
a selection of 
\begin{equation}
\intlenv\in\textrm{Image}\left(\lenmap\right)\cap\N^{\EE}.\label{eq:integer_lengths_of_metric_graph}
\end{equation}
Note that the set $\textrm{Image}\left(\lenmap\right)\cap\N^{\EE}$
is non-empty. In order to show this, one can approximate $\vec{\xi}$
by some rational vector, $\vec{\xi}^{rat}\in\Q^{I}$, such that $\lenmap\left(\vec{\xi}^{rat}\right)$
has all positive entries (just as $\lenv=\lenmap\left(\vec{\xi}\right)$
does). Note that $\lenmap\left(\vec{\xi}^{rat}\right)\in\Q^{\EE}$
(see (\ref{eq:edge_lengths_as_linear_combinations})), and therefore
the vector $\lenmap\left(\vec{\xi}^{rat}\right)$ can now be multiplied
by a common divisor of its entries to turn its entries to natural
numbers, while retaining it in $\textrm{Image}\left(\lenmap\right)$.

Take the underlying discrete graph of the metric graph, $\metgraph$,
and equip each edge $e\in\EE$ with $j_{e}-1$ new vertices of degree
two, which will split this edge into $j_{e}$ new edges. This (new)
discrete graph is denoted $\dismetgraph$ and called \emph{a discretized
version of} $\metgraph$. Note that a discretized version is not uniquely
determined by $\metgraph$. The set of all possible discretized versions
is given by $\textrm{Image}\left(\lenmap\right)\cap\N^{\EE}$. This
set depends on the 'nature of incommensurability' of the original
edge lengths, $\left\{ l_{e}\right\} _{e\in\EE}$, i.e., the rational
dependencies between these lengths. However, one can verify that $\textrm{Image}\left(\lenmap\right)\cap\N^{\EE}$
does not depend on the particular choice of incommensurate representatives,
$\left\{ \xi_{i}\right\} _{i\in I}$.

Note that one may convert the discretized graph, $\dismetgraph$,
back into a metric graph (different than $\metgraph$) by setting
all of $\dismetgraph$'s edge lengths to equal one. One would then
get a metric graph with the same connectivity as $\metgraph$, but
with integer edge lengths given by $\intlenv$. We denote this metric
graph by $\tilde{\Gamma}$ and it will turn to be useful in the course
of proving the following.
\begin{thm}
\label{thm:discrete_metric_surplus_connection}Let $\metgraph$ be
a Neumann metric graph with $\beta>0$ cycles\emph{ }and let $\dismetgraph$
be a discretized version of $\metgraph$. Let $\mu\notin\left\{ 0,2\right\} $
be a generic eigenvalue of $\lap^{\left(norm\right)}$ on $\dismetgraph$
whose nodal surplus is $\sigma$. Then
\begin{enumerate}
\item $\metgraph$ has infinitely many generic eigenvalues whose nodal surplus
equals $\sigma$.
\item $\metgraph$ has infinitely many generic eigenvalues whose nodal surplus
equals $\beta-\sigma$.
\end{enumerate}
\end{thm}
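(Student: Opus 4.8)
The plan is to manufacture, directly on the characteristic torus of $\metgraph$, two small patches of $\Sigma_{\zvec}$ carrying nodal surpluses $\sigma$ and $\beta-\sigma$, and then let the ergodic flow of $\metgraph$ harvest infinitely many generic eigenvalues from each. The bridge to the discrete datum $\mu$ is the auxiliary graph $\tilde{\Gamma}$, used in its two guises: as the equilateral subdivision (all lengths equal one, which is what Theorem \ref{thm:discrete_metric_spectral_connection} ties to $\lap^{\left(norm\right)}$ on $\dismetgraph$), and as the metric graph on the connectivity of $\metgraph$ with integer lengths $\intlenv=\lenmap\left(\vec{\eta}\right)$ for some $\vec{\eta}\in\R^{I}$. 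The point of the second description is that, since $\intlenv\in\textrm{Image}\left(\lenmap\right)$, the graphs $\metgraph$ and $\tilde{\Gamma}$ share the \emph{same} secular function $F$ and the \emph{same} torus $\torus$; they differ only in the direction of the flow used to read off their spectra, namely $\vec{\xi}$ for $\metgraph$ and $\vec{\eta}$ for $\tilde{\Gamma}$.

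First I would transport the Morse data of $\mu$ to a single torus point. Equip $\dismetgraph$, $\tilde{\Gamma}$ and $\metgraph$ with magnetic fluxes through a common choice of $\beta$ cycles. Theorem \ref{thm:discrete_metric_spectral_connection} produces the magnetic eigenvalue $k^{\tilde{\Gamma}}\left(\magv\right)=\arccos\left[1-\mu\left(\magv\right)\right]$ of $\tilde{\Gamma}$; since $\mu\notin\left\{0,2\right\}$, its principal branch $k_{0}:=\arccos\left[1-\mu\right]$ lies in the open interval $\left(0,\pi\right)$, hence is bounded away from the Dirichlet values $\left\{\pi n\right\}_{n\in\Z}$ of Theorem \ref{thm:discrete_metric_spectral_connection}, and as the only $\arccos$-branch landing in $\left(0,\pi\right)$ is that of $\mu$ itself, $k_{0}$ is a \emph{simple} eigenvalue of $\tilde{\Gamma}$. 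Because $t\mapsto\arccos\left[1-t\right]$ is smooth and strictly increasing on $\left(0,2\right)$, the chain rule together with Theorem \ref{thm:magnetic_morse_index} applied to $\dismetgraph$ (which makes $\zvec$ a non-degenerate critical point of $\mu$ with Morse index $\sigma$) shows that $\zvec$ is a non-degenerate critical point of $k^{\tilde{\Gamma}}$ with the \emph{same} Morse index $\sigma$. Running the implicit differentiation of Lemma \ref{lem:hessian_of_eigenvalue_equals_hessian_on_torus} along the flow $\vec{\eta}$ (legitimate because $k_{0}$ simple gives $\vec{\eta}\cdot\vec{\nabla}F\neq0$ at $\tvec_{0}:=k_{0}\vec{\eta}$) then yields
\[
\hes_{k^{\tilde{\Gamma}}}\left(\zvec\right)=-\left.\frac{\hes_{F}}{\vec{\eta}\cdot\vec{\nabla}F}\right|_{\left(\tvec_{0};\zvec\right)}.
\]
Consequently $\left.\hes_{F}\right|_{\tvec_{0}}$ is non-degenerate, and its number of negative eigenvalues $m_{0}$ satisfies $\left\{m_{0},\beta-m_{0}\right\}=\left\{\sigma,\beta-\sigma\right\}$, the ambiguity being exactly the sign of the scalar $\vec{\eta}\cdot\vec{\nabla}F$. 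The structural fact that makes this step go through is that $\hes_{F}$ at a torus point is intrinsic, independent of the flow direction, so the two graphs can disagree only through the denominator.

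Next I would switch to the flow of $\metgraph$. By Remark \ref{rem:eigenfunctions_on_the_torus} the eigenfunction vertex values are a continuous function of the torus point alone, and at $\tvec_{0}$ they agree on $\V$ with the $\tilde{\Gamma}$-eigenfunction $\degmat^{\nicefrac{1}{2}}f$, where $f$ is the eigenvector of $\mu$ on $\dismetgraph$; this is non-zero on $\V$ by the genericity of $\mu$, so genericity for $\metgraph$ holds on a torus-neighbourhood of $\tvec_{0}$. Since the $\metgraph$-spectrum is discrete, $\vec{\xi}$ cannot be tangent to $\Sigma_{\zvec}$ on any open set, so I can choose $\tvec_{1}\in\Sigma_{\zvec}$ arbitrarily close to $\tvec_{0}$ with $\vec{\xi}\cdot\vec{\nabla}F\left(\tvec_{1}\right)\neq0$, with $\hes_{F}$ still non-degenerate of index $m_{0}$, and with the eigenfunction non-vanishing on $\V$. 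On a small patch $\Xi_{1}\ni\tvec_{1}$ of $\Sigma_{\zvec}$ the matrix $-\hes_{F}/\left(\vec{\xi}\cdot\vec{\nabla}F\right)$ is non-degenerate with a constant number of negative eigenvalues $s_{1}\in\left\{m_{0},\beta-m_{0}\right\}=\left\{\sigma,\beta-\sigma\right\}$; by Lemma \ref{lem:hessian_of_eigenvalue_equals_hessian_on_torus} and Theorem \ref{thm:magnetic_morse_index} this equals the nodal surplus of every $\metgraph$-eigenvalue produced there, and ergodicity of the flow $k\vec{\xi}$ supplies infinitely many such generic eigenvalues with surplus $s_{1}$.

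To obtain the complementary value I would invoke the symmetry of Lemma \ref{lem:symmetry_of_secular_function}, exactly as in the proof of Theorem \ref{thm:infinitely_many_with_same_or_symmetric_surplus}: it gives $\left.\hes_{F}\right|_{-\tvec_{1}}=\left.\hes_{F}\right|_{\tvec_{1}}$ and $\left.\vec{\nabla}F\right|_{-\tvec_{1}}=-\left.\vec{\nabla}F\right|_{\tvec_{1}}$, so the eigenvalue Hessian at $-\tvec_{1}$ is the negative of the one at $\tvec_{1}$ and thus has $\beta-s_{1}$ negative eigenvalues; moreover $-\tvec_{1}\in\Sigma_{\zvec}$ and genericity is preserved under $\tvec\mapsto-\tvec$. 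A second application of ergodicity on a patch around $-\tvec_{1}$ then yields infinitely many generic eigenvalues of surplus $\beta-s_{1}$. Since $\left\{s_{1},\beta-s_{1}\right\}=\left\{\sigma,\beta-\sigma\right\}$, both conclusions follow at once, and the argument never cites Theorem \ref{thm:infinitely_many_with_same_or_symmetric_surplus} but only replays its symmetry device. I expect the main obstacle to be the bookkeeping of the two flow directions in the middle step: one must be certain that passing from $\vec{\eta}$ to $\vec{\xi}$ can only interchange $\sigma$ and $\beta-\sigma$ and never produce a third value, and that transversality and genericity can be arranged near $\tvec_{0}$ even though $\tvec_{0}$ is dictated by $\tilde{\Gamma}$ rather than by $\metgraph$'s own flow; the discreteness of the spectrum and the flow-independence of $\hes_{F}$ are precisely what resolve these points.
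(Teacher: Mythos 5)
Your proposal shares the architecture of the paper's proof of the general case: both place $\metgraph$ and the equilateral graph $\tilde{\Gamma}$ built from $\dismetgraph$ on the \emph{same} torus $\torus$ with the same secular function $F$, transfer the Morse index of $\mu$ to a point of $\Sigma_{\zvec}$ via the $\arccos$-branch and lemma \ref{lem:hessian_of_eigenvalue_equals_hessian_on_torus}, and then let the ergodic $\vec{\xi}$-flow harvest infinitely many generic eigenvalues from small patches. Where you genuinely diverge is in how the two surpluses $\sigma$ and $\beta-\sigma$ are produced and how the flow-direction ambiguity is handled. The paper takes \emph{two} branches $k_{0}=b_{0}[1-\mu]$ and $k_{1}=b_{1}[1-\mu]$ of opposite parity, whose Hessians are negative multiples of each other, and then proves via the convex-combination argument that $\intlenv\cdot\vec{\nabla}F$ and $\vec{\xi}\cdot\vec{\nabla}F$ have the \emph{same sign}, so the Morse indices read along the $\vec{\xi}$-flow at the two base points are exactly $\sigma$ and $\beta-\sigma$ with no symmetry lemma needed. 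You instead use a single branch, accept that the sign of $\vec{\xi}\cdot\vec{\nabla}F$ relative to $\vec{\eta}\cdot\vec{\nabla}F$ is unknown (which is harmless, since the conclusion is invariant under $\sigma\leftrightarrow\beta-\sigma$), and recover the complementary index by the reflection symmetry of lemma \ref{lem:symmetry_of_secular_function}. This is shorter and perfectly valid for the Neumann graphs of the statement, but it forfeits precisely the feature the paper cares about: remark \ref{rem:generalizations_for_non_neumann_and_non_zero_potential} points out that the published proof deliberately bypasses lemma \ref{lem:symmetry_of_secular_function} so that the argument has a chance of extending to non-Neumann conditions and nonzero potentials, where that symmetry fails.

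The one step that is genuinely under-justified is your passage from $\tvec_{0}$ to a nearby $\tvec_{1}$ with $\vec{\xi}\cdot\vec{\nabla}F(\tvec_{1})\neq0$. Your stated reason --- ``since the $\metgraph$-spectrum is discrete, $\vec{\xi}$ cannot be tangent to $\Sigma_{\zvec}$ on any open set'' --- does not follow: discreteness of the spectrum is a statement about the single flow line $k\vec{\xi}$ through the origin, which is dense in $\torus$ but has no reason to meet a measure-zero tangency locus inside the codimension-one set $\Sigma_{\zvec}$; and even an isolated tangential intersection would not contradict discreteness. The claim itself is true and can be repaired --- e.g.\ by the paper's own homotopy argument, which shows $\left(t\intlenv+(1-t)\vec{\xi}\right)\cdot\vec{\nabla}F\neq0$ at the base point itself so that no perturbation to $\tvec_{1}$ is needed, or by noting that if $\vec{\xi}\cdot\vec{\nabla}F$ vanished identically on a neighbourhood of $\tvec_{0}$ in $\Sigma_{\zvec}$ then the line $\tvec_{0}+t\vec{\xi}$ would lie in $\Sigma_{\zvec}$, forcing $F\equiv0$ by density and continuity --- but as written this link in the chain is missing. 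The remaining ingredients (simplicity of $k_{0}=b_{0}[1-\mu]$ on $\tilde{\Gamma}$, flow-independence of $\hes_{F}$, genericity on a small patch via remark \ref{rem:eigenfunctions_on_the_torus} and theorem \ref{thm:discrete_metric_spectral_connection}, and the final appeal to theorem \ref{thm:magnetic_morse_index}) are all handled correctly and match the paper.
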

\begin{rem}
It is important to emphasize that the theorem holds for \textbf{all}
discretized versions, $\dismetgraph$, of a metric graph, $\metgraph$.
\begin{rem}
\label{rem:discrete_eigenvalues_are_0_2}One should note that the
theorem is empty if the spectrum of the chosen discretized version
consists only of the eigenvalues $\left\{ 0,2\right\} $ and other
non-generic eigenvalues. The only connected graph which has no eigenvalues
different from $\left\{ 0,2\right\} $ is the single edge graph (see
for example, lemma 1.8 in \cite{Chung_spectralgraph}). Nevertheless
it does not fit our theorem as it has no cycles. Yet, there are discrete
graphs (with cycles) whose all eigenvalues which are different than
$\left\{ 0,2\right\} $ are not simple. The $n-$cube graph forms
such an example (example 1.6 in \cite{Chung_spectralgraph}). Following
the previous remark, one may still wonder whether for given a metric
graph, there is always \textbf{some} discretized version for which
theorem \ref{thm:discrete_metric_surplus_connection} is not empty.
We are not aware of possible answers to this question.
\end{rem}
\end{rem}
\begin{proof}
Start by proving the theorem for an equilateral graph $\metgraph$,
and a specific discretization of it. We may assume without loss of
generality that for all $e\in\EE$, $l_{e}=1$ (as nodal count is
indifferent to scaling). We choose a discretized version, $\dismetgraph$,
which has the same vertex and edge sets and connectivity as $\metgraph$
does (it is obtained by choosing $\intlenv=\left(1,\ldots,1\right)$
in (\ref{eq:integer_lengths_of_metric_graph})). Let $\mu\left(\zvec\right)\notin\left\{ 0,2\right\} $
be a generic eigenvalue of $\lap^{\left(norm\right)}\left(\zvec\right)$
on $\dismetgraph$ whose eigenvector is $f$ and nodal surplus is
$\sigma$. Consider the magnetic Laplacian, $\lap^{\left(norm\right)}\left(\magv\right)$,
apply theorem \ref{thm:magnetic_morse_index} and obtain that $\mu\left(\magv\right)$
has a critical point at $\magv=\zvec$ and its Morse index is $\morse_{\mu}\left(\zvec\right)=\sigma$.
Theorem \ref{thm:discrete_metric_spectral_connection} shows that
$\left\{ \left(\arccos\left[1-\mu\left(\magv\right)\right]\right)^{2}\right\} $
are eigenvalues of the magnetic Schr�dinger operator on $\metgraph$
and we wish to obtain their Morse indices at $\magv=\zvec$. First,
exclude the values $\pm1$ from the domain of $\arccos$ and consider
it as a union of single valued 'branch' functions $\left\{ b_{p}:\left(-1,1\right)\rightarrow\left(p\pi,\left(p+1\right)\pi\right))\right\} _{p=0}^{\infty}$.
With this notation, those eigenvalues of $\metgraph$ resulting from
$\mu\left(\magv\right)$ are given by 
\[
\lambda_{\left(p\right)}\left(\magv\right)=\left(b_{p}\left[1-\mu\left(\magv\right)\right]\right)^{2},
\]
where the subscript $_{\left(p\right)}$ is not to be confused with
the serial number of $\lambda_{\left(p\right)}$ in $\metgraph$'s
spectrum. Each of those eigenvalues is obtained as a function of $\mu\left(\magv\right)$,
which has a well defined monotonicity:
\begin{enumerate}
\item For even values of $p$, $\lambda_{\left(p\right)}\left(\mu\left(\magv\right)\right)$
is a monotone strictly increasing function of $\mu\left(\magv\right)$.
The Hessians of $\lambda_{\left(p\right)}\left(\magv\right)$ and
$\mu\left(\magv\right)$ at $\magv=\zvec$ are therefore equal up
to a positive multiple, which yields equality of their Morse indices,
\begin{equation}
\morse_{\lambda_{\left(p\right)}}\left(\zvec\right)=\morse_{\mu}\left(\zvec\right).\label{eq:morse_indices_relation_1}
\end{equation}

\item For odd values of $p$, $\lambda_{\left(p\right)}\left(\mu\left(\magv\right)\right)$
is a monotone strictly decreasing function of $\mu\left(\magv\right)$.
Therefore, in this case the Hessians of $\lambda_{\left(p\right)}\left(\magv\right)$
and $\mu\left(\magv\right)$ at $\magv=\zvec$ are equal up to a negative
multiple, which yields the following relation of their Morse indices
\begin{equation}
\morse_{\lambda_{\left(p\right)}}\left(\zvec\right)=\beta-\morse_{\mu}\left(\zvec\right).\label{eq:morse_indices_relation_2}
\end{equation}

\end{enumerate}
We may now apply the metric version of theorem \ref{thm:magnetic_morse_index}
for the eigenvalues $\lambda_{\left(p\right)}$, but first verify
that they satisfy the theorem's assumptions. Their simplicity (at
$\magv=\zvec$) follows from theorem \ref{thm:discrete_metric_spectral_connection}
as $\mu\left(\zvec\right)$ is simple and $\mu\left(\zvec\right)\notin\left\{ 0,2\right\} $
(which guarantees that $\lambda_{\left(p\right)}\left(\zvec\right)\notin\left\{ \left(n\pi\right)^{2}\right\} _{n\in\Z}$,
i.e. different from the Dirichlet eigenvalues). In addition, according
to theorem \ref{thm:discrete_metric_spectral_connection} (part (\ref{enu:discrete_metric_thm_eigenfunction_on_vertices})),
the restriction of the eigenfunction of $\lambda_{\left(p\right)}\left(\zvec\right)$
to the graph vertices equals $\degmat^{\nicefrac{1}{2}}f$ where $f$,
the eigenvector corresponding to $\mu\left(\zvec\right)$, is different
than zero on all vertices, by the theorem's assumption. The nodal
surplus of $\lambda_{\left(p\right)}$, which we denote by $\sigma^{\left(\metgraph\right)}\left(\lambda_{\left(p\right)}\right)$
can therefore be expressed as 
\begin{eqnarray}
\textrm{for even}\, p,\,\,\,\sigma^{\left(\metgraph\right)}\left(\lambda_{\left(p\right)}\right) & = & \morse_{\lambda_{\left(p\right)}}\left(\zvec\right)=\morse_{\mu\left(\magv\right)}\left(\zvec\right)=\sigma\label{eq:equality_morse_indices_metric_discrete}\\
\textrm{for odd}\, p,\,\,\,\,\sigma^{\left(\metgraph\right)}\left(\lambda_{\left(p\right)}\right) & = & \morse_{\lambda_{\left(p\right)}}\left(\zvec\right)=\beta-\morse_{\mu\left(\magv\right)}\left(\zvec\right)=\beta-\sigma,\label{eq:equality_morse_indices_metric_discrete_reversed}
\end{eqnarray}
which proves the theorem for the case of an equilateral metric graph
if  its discretization given by choosing $\intlenv=\left(1,\ldots,1\right)$.

If the graph $\metgraph$ is not equilateral and the discretized version,
$\dismetgraph$, is arbitrary there is no exact expression which connects
both spectra of $\metgraph$ and $\dismetgraph$. The route we take
this time is to turn $\dismetgraph$ into an equilateral graph, $\tilde{\metgraph}$,
all of whose edge lengths equal one. Therefore, there are Morse index
connections similar to (\ref{eq:morse_indices_relation_1}), (\ref{eq:morse_indices_relation_2})
between $\tilde{\Gamma}$ and $\dismetgraph$. We then show that infinitely
many eigenvalues of $\metgraph$ share the same Morse index (and hence
the same surplus) as eigenvalues of $\tilde{\Gamma}$ and this yields
the desired statements in the theorem. This is the content of the
rest of the proof.

Let $\mu\left(\zvec\right)\notin\left\{ 0,2\right\} $ be a generic
eigenvalue of $\lap^{\left(norm\right)}\left(\zvec\right)$ on $\dismetgraph$.
We may therefore conclude, just as in the first part of the proof,
that $\hes_{\mu}\left(\zvec\right)$ equals up to a factor the Hessians
of infinitely many eigenvalues of $\tilde{\metgraph}$ (half of these
factors are positive and half are negative). Denote by $\intlenv$
the vector which generates the discretization $\dismetgraph$. Consider
$\tilde{\metgraph}$ as a metric graph with the same connectivity
as $\metgraph$, but with $\left\{ \intlen_{e}\right\} _{e\in\EE}$
as the set of its edge lengths. The $k$-eigenvalues of $\tilde{\Gamma}$
are then described by 
\[
\left\{ k\left(\magv\right);\,\, k\intlenv\in\Sigma_{\magv}\right\} ,
\]
where $\Sigma_{\magv}=\left\{ \tvec;\,\, F\left(\tvec;\,\magv\right)=0\right\} $.
Namely, the same torus, $\torus$, can be used to describe the eigenvalues
of $\metgraph$ and $\tilde{\metgraph}$, but with different flow
directions ($\vec{\xi}$ for $\metgraph$ and $\intlenv$ for $\tilde{\metgraph}$).
As for the spectral connection to $\dismetgraph$, we know from theorem
\ref{thm:discrete_metric_spectral_connection} that $\tilde{\metgraph}$
has the $k$-eigenvalues $\left\{ b_{p}\left[1-\mu\left(\magv\right)\right]\right\} _{p\in\N\cup\left\{ 0\right\} }$.
Choose two $k$-eigenvalues with different parity of $p$, for example
\begin{eqnarray*}
k_{0}\left(\magv\right) & := & b_{0}\left[1-\mu\left(\magv\right)\right]\\
k_{1}\left(\magv\right) & := & b_{1}\left[1-\mu\left(\magv\right)\right].
\end{eqnarray*}
From a similar monotonicity argument, as in the first part of the
proof, we get that 
\begin{equation}
\hes_{k_{0}}\left(\zvec\right)=-c\hes_{k_{1}}\left(\zvec\right),\label{eq:hessians_k0_k1}
\end{equation}
where $c>0$ and these Hessians are non-degenerate. Use relation (\ref{eq:hessian_of_eigenvalue_equals_hessian_on_torus})
in lemma \ref{lem:hessian_of_eigenvalue_equals_hessian_on_torus}
to write
\[
\hes_{k_{i}}\left(\zvec\right)=-\left.\frac{\hes_{F}}{\left(\intlenv\cdot\vec{\nabla}F\right)}\right|_{\left(k_{i}\left(\zvec\right)\intlenv;\zvec\right)}\,\textrm{for}\,\, i=0,1.
\]
As these Hessians are non-degenerate, we may choose neighbourhoods
$\Xi_{i}$ $\left(i=0,1\right)$ of $k_{i}\left(\zvec\right)\cdot\intlenv$
on $\Sigma_{\magv}$ such that $\forall\tvec\in\Xi_{i},\,\left.\frac{\hes_{F}}{\left(\intlenv\cdot\vec{\nabla}F\right)}\right|_{\left(\tvec;\zvec\right)}$is
non-degenerate and define the corresponding Morse index function
\begin{eqnarray}
m_{\intlenv}:\left(\Xi_{0}\cup\Xi_{1}\right) & \rightarrow & \N\cup\left\{ 0\right\} \nonumber \\
m_{\intlenv}\left(\tvec\right) & := & \morse\left(\left.\frac{\hes_{F}}{\left(\intlenv\cdot\vec{\nabla}F\right)}\right|_{\left(\tvec;\zvec\right)}\right),\label{eq:Morse_index_function_definition}
\end{eqnarray}
where $\morse\left(\cdot\right)$ returns the number of negative eigenvalues.
Note that $m_{\intlenv}$ is constant on each set $\Xi_{i}$ , due
to non-degeneracy of the Hessians there. From (\ref{eq:hessians_k0_k1})
we get the following relation on these Hessians
\[
\forall\tvec_{0}\in\Xi_{0},\tvec_{1}\in\Xi_{1}\,\,\,\, m_{\intlenv}\left(\tvec_{0}\right)+m_{\intlenv}\left(\tvec_{1}\right)=\beta.
\]

Let us now return to the original graph, $\metgraph$. The flow which
characterizes its eigenvalues goes in the direction $\vec{\xi}$ within
the torus $\torus$ and we wish to adapt the definition of the Morse
index function, (\ref{eq:Morse_index_function_definition}), to this
flow. We now show that $\left.\intlenv\cdot\vec{\nabla}F\right|_{\left(\tvec;\zvec\right)}$
and $\left.\vec{\xi}\cdot\vec{\nabla}F\right|_{\left(\tvec;\zvec\right)}$
have the same sign on $\Xi_{i}$ and conclude that 
\[
\left.m_{\vec{\xi}}\right|_{\Xi_{0}\cup\Xi_{1}}\equiv\left.m_{\intlenv}\right|_{\Xi_{0}\cup\Xi_{1}}.
\]
As the flow defined by $\vec{\xi}$ pierces both $\Xi_{0}$ and $\Xi_{1}$
an infinite number of times, we get an infinite number of eigenvalues
of $\metgraph$ whose Morse indices are $\sigma$ and $\beta-\sigma$.
Applying theorem \ref{thm:magnetic_morse_index} finishes the proof.

All remains is therefore to show that $\left.\intlenv\cdot\vec{\nabla}F\right|_{\left(\tvec;\zvec\right)}$
and $\left.\vec{\xi}\cdot\vec{\nabla}F\right|_{\left(\tvec;\zvec\right)}$
have the same sign (for any $\tvec\in\torus$ which represents a simple
eigenvalue). We have shown in the course of the proof of lemma \ref{lem:infinitely_many_generic_eigenvalues}
that $\left.\intlenv\cdot\vec{\nabla}F\right|_{\left(\tvec;\zvec\right)}$
cannot vanish as $\tvec$ represents a simple eigenvalue. We claim
that this expression does not vanish even if we replace $\vec{\intlen}$
with any convex linear combination of $\vec{\intlen}$ and $\vec{\xi}$.
Namely, we show that for all $t\in\left[0,1\right],\,\,\left.\left(t\intlenv+\left(1-t\right)\vec{\xi}\right)\cdot\vec{\nabla}F\right|_{\left(\tvec;\zvec\right)}\neq0$
from which it is clear that $\left.\intlenv\cdot\vec{\nabla}F\right|_{\left(\tvec;\zvec\right)}$
and $\left.\vec{\xi}\cdot\vec{\nabla}F\right|_{\left(\tvec;\zvec\right)}$
have the same sign. Finally, the observation $\forall t\in\left[0,1\right],\,\,\left.\left(t\intlenv+\left(1-t\right)\vec{\xi}\right)\cdot\vec{\nabla}F\right|_{\left(\tvec;\zvec\right)}\neq0$
is explained using the simplicity of the eigenvalue. The simplicity
of the eigenvalue guarantees that $\Sigma_{\magv}$ is homotopic to
an $\left|I\right|-1$ dimensional disc in a close neighbourhood of
$\tvec$. For each $t\in\left[0,1\right]$, the flow in the direction
$t\intlenv+\left(1-t\right)\vec{\xi}$ describes the eigenvalues of
a certain metric graph. This is since $\lenmap\left(t\intlenv+\left(1-t\right)\vec{\xi}\right)$
is a vector of positive entries which is true as $\lenmap\left(\intlenv\right)$
and $\lenmap\left(\vec{\xi}\right)$ have this property and the length
map, $\lenmap\left(t\intlenv+\left(1-t\right)\vec{\xi}\right)$ is
linear (see (\ref{eq:length_map})). At some point, the flow in the
direction $t\intlenv+\left(1-t\right)\vec{\xi}$ will pierce $\Sigma_{\magv}$
at $\tvec$ and generate an eigenvalue. If we had $\left.\left(t\intlenv+\left(1-t\right)\vec{\xi}\right)\cdot\vec{\nabla}F\right|_{\left(\tvec;\zvec\right)}=0$,
this would mean that the flow is tangential to $\Sigma_{\magv}$ (with
$\tvec$ being the touching point). This would allow for a slight
perturbation of the direction of the flow in a way which would remove
the relevant eigenvalue from the spectrum (or alternatively, would
create a new one in its vicinity), which is not possible.\end{proof}
\begin{rem}
\label{rem:generalizations_for_non_neumann_and_non_zero_potential}This
theorem can be compared with theorem \ref{thm:infinitely_many_with_same_or_symmetric_surplus},
as their conclusions are similar. Yet, the current theorem seems weaker
as it requires more conditions (see remark \ref{rem:discrete_eigenvalues_are_0_2}
for example). The proof, however, contains an element which allows
to bypass the need of symmetry (lemma \ref{lem:symmetry_of_secular_function})
which is required in the proof of theorem \ref{thm:infinitely_many_with_same_or_symmetric_surplus}
and thus gives the possibility to generalize this result to non-Neumann
graphs and graphs with potentials. It was shown in \cite{Pan_lmp06}
that theorem \ref{thm:discrete_metric_spectral_connection} holds
also for $\delta$-type conditions and for some electric potentials,
if the the inverse Hill discriminant is used instead of the $\arccos$.
Therefore, one might try to repeat the proof above, replacing the
$\arccos$ with the corresponding inverse Hill discriminant. The mechanism
of ergodic flow on the torus does not describe accurately the eigenvalues
of $\delta$-type conditions, but it does so asymptotically, which
should be enough for our purpose. The case of electric potential might
be treated as well, as it was shown in recent work \cite{RueSmi_jpa12}
that its spectrum can be described asymptotically by secular function
similar to (\ref{eq:secular_function}).
\end{rem}

\section{A discussion\label{sec:discussion}}

The main result of this paper, as is implied by its title, is the
solution of the inverse nodal problem of determining a tree graph.
The solution is similar for both metric and discrete graphs - the
nodal point count sequence $\left\{ 0,1,2,3,\ldots\right\} $ may
be possessed solely by tree graphs. The similarity between discrete
and metric graphs carries over to the proofs - both use as a crucial
tool the recently established connection between the graph's nodal
count and dependence of its eigenvalues on magnetic fields, \cite{Ber_arx11,BerWey_ptrs13,cdv_arx12}.
The proof for discrete graphs is based on a very basic observation
- the trace of the operator does not depend on magnetic fields. The
proof for the metric case is of more exploratory type and concerns
properties of some individual eigenvalues. This proof yields some
additional results and offers further investigative directions. Yet,
it might seem superfluous for our main purpose. It would be interesting
to develop an alternative proof for the metric inverse problem which
tackles a specific spectral invariant similarly to the discrete case.
Possible candidates which arise as natural generalizations of the
trace are the vacuum energy with some regularization, or the value
of the zeta function at some point. Having said that, one should note
that the proof of theorem \ref{thm:infinitely_many_with_same_or_symmetric_surplus}
does include an implicit spectral invariant (see remark \ref{rem:spectral_invariant_for_metric_case}).
This work also sets some restrictions on possible nodal count sequences
which one can obtain from a graph. For example, we show that non-tree
graphs cannot have a nodal count sequence which is almost like the
tree nodal count (up to a finite number of discrepancies). In this
sense, our result resembles a recent 'quasi-isospectrality' result
by Rueckriemen, \cite{Rueckriemen_arx12}. He shows that if the spectra
of two graphs agree everywhere aside from a sufficiently sparse set,
then they are isospectral. In both his and our case, there are typical
sequences (either nodal or spectral) which characterize the graphs
and do not allow for 'small' number of discrepancies.

Putting aside the connection to the nodal count, one could phrase
the results we obtained as purely magnetic properties of graphs' spectra;
It is not possible for all graph eigenvalues to show diamagnetic behaviour
(see chapter 2 in \cite{FouHel_superconductivity_book}). This statement
holds for discrete graphs, whereas metric graphs obey even a stronger
restriction - an infinite number of eigenvalues must violate the diamagnetic
behaviour.

It is interesting to examine more deeply the genericity of the theorems
\ref{thm:main_theorem_discrete} and \ref{thm:main_theorem_metric}.
We have already claimed that assumption \ref{ass:generality_assumption}
is generic under various perturbations of the operator. Theorem \ref{thm:main_theorem_discrete}
was indeed proved with quite a high generality by allowing a dense
set of discrete Schr�dinger operators. Yet, one may wish to restrict
oneself to some particular classes of Laplacians (for example, the
adjacency matrix, $\conmat=\degmat-\degmat^{\nicefrac{1}{2}}\lap\degmat^{\nicefrac{1}{2}}$)
and ask for which graphs does theorem \ref{thm:main_theorem_discrete}
hold. This seems a rather involved question, which concerns a detailed
study of the graph automorphism group (see for example chapter 2 in
\cite{CveRowSim_eigenspaces97} and the references within). In the
case of metric graphs the assumptions in theorems \ref{thm:main_theorem_metric}
and \ref{thm:infinitely_many_with_same_or_symmetric_surplus} seem
milder than in the discrete case as only a single generic eigenvalue
is needed. However, we restrict ourselves to Neumann vertex conditions
and with no electric potentials. It is therefore desirable to generalize
the current results to include other vertex conditions and potentials.
A first step for doing so is suggested by the method of discretized
versions of a graph presented in section \ref{sec:discretized_vesrions}.
See also remark \ref{rem:generalizations_for_non_neumann_and_non_zero_potential}
which offers a possible approach for such generalizations. Of particular
interest is the question\textbf{ }whether for a given metric graph,
there is always some discretized version for which theorem \ref{thm:discrete_metric_surplus_connection}
is not empty. If his is the case theorem \ref{thm:discrete_metric_surplus_connection}
might serve as an alternative to theorem \ref{thm:infinitely_many_with_same_or_symmetric_surplus}
in the course of proof of the metric inverse result (theorem \ref{thm:main_theorem_metric}).

The inverse \emph{nodal domain }count problem for metric graphs was
solved as well in this paper, i.e., the nodal domain count sequence,
$\left\{ 1,2,3,\ldots\right\} $ implies the graph is a tree. Numerical
evidence suggests that this should be the case for discrete graphs
as well and it is interesting to prove (or maybe disprove) such an
inverse result. A possible approach might be to check if a similar
magnetic-nodal connection exist for the nodal domain count as well
(which would form a non-trivial generalization of the works \cite{Ber_arx11,BerWey_ptrs13,cdv_arx12})
and to apply it for the inverse problem. Another generalization direction
of the magnetic-nodal link, from which inverse problems would benefit,
is to give some treatment for non-simple eigenvalues and for eigenfunctions
which vanish at vertices. This is highly relevant, as the frequently
used discrete standard Laplacian tends to have such non-generic spectra.

Let us consider the inverse results in the paper from a spectral geometric
viewpoint. We have managed to distinguish a graph with cycles from
a tree by means of  the nodal count. The next almost immediate problem
would be to try and deduce the exact number of cycles out of the nodal
count, if this is possible at all. We know that under generic assumptions,
the number of cycles, $\beta$, is the upper bound of the nodal surplus.
One would then inquire whether this bound is attained somewhere in
the spectrum, i.e. whether 
\begin{equation}
\beta=\max_{n}\left\{ \fc-\left(n-1\right)\right\} ?\label{eq:beta_is_max_surplus?}
\end{equation}
In the discrete graphs setting, this is not the general case as shows
the following simple counter example

\begin{figure}[h]
\includegraphics[scale=0.6]{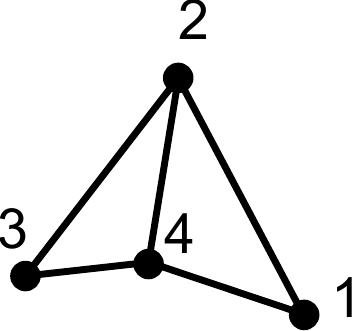}\hspace{10mm}$L=\begin{pmatrix}1 & -1 & 0 & -1\\
-1 & 2 & -1 & -1\\
0 & -1 & 3 & -1\\
-1 & -1 & -1 & 4
\end{pmatrix}$\caption{A simple graph with a certain choice for its Laplacian}
\label{fig:counter_example_1}
\end{figure}
The graph in figure \ref{fig:counter_example_1}(a) has $\beta=2$,
yet its  nodal point count is $\left\{ 0,2,3,3\right\} $ (with respect
to the given Laplacian matrix) and therefore $\max_{n}\left\{ \fc-\left(n-1\right)\right\} =1$.
Note however, that the nodal count depends not only on the graph,
but also on the chosen Laplacian for it (within the possible models
described in section \ref{sec:Introduction}). For example, had we
replaced the diagonal of the Laplacian with $\{4,3,2,1\}$ we would
have gotten the nodal count $\left\{ 0,3,3,4\right\} $, which does
attain the maximum in (\ref{eq:beta_is_max_surplus?}). We may therefore
still wonder whether (\ref{eq:beta_is_max_surplus?}) holds within
a certain class of Laplacians. If we consider the normalized Laplacian
on the same graph, for example, we get that the eigenvalues are simple,
but some eigenfunctions vanish at graph vertices and anyway (\ref{eq:beta_is_max_surplus?})
fails for the generic eigenvalues. The same is true for the so-called
standard Laplacian as well (same Laplacian as above, but with the
vertex degrees on its diagonal), where even not all eigenvalues are
simple. This calls for appropriate definitions of the  nodal count
for non-generic eigenvalues and possible generalizations of theorem
\ref{thm:magnetic_morse_index}. The question of deducing $\beta$
from the nodal count can be also asked for the metric graph setting.
It is possible to show that indeed under a genericity assumption one
may deduce the number of graph cycles out of the  nodal count. This
is to be discussed in a future work, where the nodal surplus distribution
of a metric graph is studied, \cite{BanBer_surplus_distribution}.
One should compare this inverse nodal problem to its analogous spectral
one - whether the spectrum reveals the number of graph cycles. This
is indeed the case, both for discrete graphs (see e.g., lemma 4 in
\cite{vanDamHae_laa03}) and for metric graphs (\cite{Kurasov_arkmat08}).
This comparison between the spectral data and the nodal one goes hand
to hand with the aforementioned conjecture in \cite{GSS05} that nodal
information resolves spectral ambiguity whenever it occurs. In the
discrete graph example shown in figure \ref{fig:counter_example_1},
however, the information flow is in the opposite way than the one
of the conjecture - the spectrum stores some geometric information
which the nodal count does not reveal. Yet, for metric graphs, as
was mentioned here, both spectral and nodal sequences tell the number
of graph cycles. Finally, we wish to mention trace formulae, which
are a major tool for treating inverse problems, as they connect spectral
information to geometry. For graphs, these formulae express spectral
functions in terms of closed paths on the graph. A trace formula for
the spectral counting function of a metric graph, \cite{KS97}, was
used to solve the inverse spectral problem on metric graphs \cite{GutSmi_jpa01}.
It was suggested by Smilansky that similar trace formulae exist for
functions of the nodal count and there is indeed some supporting evidence
and derivations for specific classes of two dimensional surfaces in
\cite{AroBanFajGnu_jpa12,AroSmi_arx10,GKS06}, and progress on the
problem for metric graphs in \cite{BanBerSmi_ahp12}. Yet, an exact
trace formula for the nodal count has not been found yet. Such a formula
will crucially advance solutions of inverse nodal problems of the
kind presented in this paper and of many more.

\section*{Acknowledgments}

The author wishes to acknowledge Lennie Friedlander who proposed him
the main inverse problem discussed in this paper. The author cordially
thanks Uzy Smilansky for numerous discussions of inverse nodal problems
and for his continuous encouragement. The ingenious insights offered
by Gregory Berkolaiko and Konstantin Pankrashkin are highly appreciated.
The author greatly benefited from discussions with Jon Keating, Michael
Levitin, Idan Oren and Adam Sawicki. The comments and corrections
offered by Gregory Berkolaiko and Tracy Weyand on the final version
helped to substantially improve the paper. This work is supported
by EPSRC, grant number EP/H028803/1.

\bibliographystyle{plain}
\bibliography{Ramis_Papers,Discrete_Graphs-Intro_Determinants_and_Magnetic,Miscellaneous,Nodal_Domains_Discrete_Graphs,Nodal_Domains_Metric_Graphs,Nodal_Domains_General,Qunatum_Graphs_Intro}

\end{document}